\documentclass[10pt,pra,twocolumn]{revtex4}
\usepackage{hyperref}
\usepackage{latexsym,amsmath,amssymb,amsfonts,graphicx,color,amsthm}
\usepackage{enumerate}
\usepackage{verbatim}
\usepackage{subfig}

\newcommand{\cA}{\mathcal{A}}
\newcommand{\cB}{\mathcal{B}}
\newcommand{\cC}{\mathcal{C}}

\newcommand{\cI}{\mathcal{I}}

\newcommand{\cS}{\mathcal{S}}

\newcommand{\cU}{\mathcal{U}}

\newcommand{\cX}{\mathcal{X}}
\newcommand{\cY}{\mathcal{Y}}
\newcommand{\cZ}{\mathcal{Z}}

\newcommand{\tr}{\text{tr}}

\newtheorem{theorem}{Theorem}
\newtheorem*{theorem*}{Theorem}
\newtheorem{lemma}[theorem]{Lemma}
\newtheorem{property}{Property}
\newtheorem*{property*}{Property}

\newtheorem{definition}[theorem]{Definition}


\begin{document}

\title{Unextendible mutually unbiased bases in prime-squared dimensions}
\author{Vishakh Hegde}
\affiliation{Department of Physics, IIT Madras, Chennai - 600036, India}
\author{Prabha Mandayam}
\affiliation{Department of Physics, IIT Madras, Chennai - 600036, India}

\date{\today}
\begin{abstract}
A set of mutually unbiased bases (MUBs) is said to be {\it unextendible} if there does not exist another basis that is
unbiased with respect to the given set. Here, we prove the existence of smaller sets of MUBs in prime-squared dimensions ($d=p^{2}$) that cannot be extended to a complete set using the generalized Pauli operators. We further observe an interesting connection between the existence of unextendible sets and the tightness of entropic uncertainty relations (EURs) in these dimensions. In particular, we show that our construction of unextendible sets of MUBs naturally leads to sets of $p+1$ MUBs that saturate both a Shannon ($H_{1}$) and a collision ($H_{2}$) entropic lower bound. Such an identification of smaller sets of MUBs satisfying tight EURs is crucial for cryptographic applications as well as constructing optimal entanglement witnesses for higher dimensional systems.
\end{abstract}

\maketitle

Two orthonormal bases $\cA = \{|a_{i}\rangle, i=1,\ldots, d\}$ and $\cB = \{|b_{j}\rangle, j=1, \ldots, d\}$ of a $d$-dimensional Hilbert space $\mathbb{C}^{d}$ are said to be {\bf mutually unbiased} if for all basis vectors $|a_{i}\rangle \in \cA$ and $|b_{j}\rangle\in \cB$,
\begin{equation}
\vert\langle a_{i}|b_{j}\rangle\vert = \frac{1}{\sqrt{d}}, \forall i, j = 1,\ldots, d.
\end{equation}
In physical terms, if a system is prepared in an eigenstate of basis $\cA$ and measured in basis $\cB$, all outcomes
are equally probable. A set of orthonormal bases $\{\cB_{1}, \cB_{2}, \ldots, \cB_{m}\}$ in $\mathbb{C}^{d}$ is called a set mutually unbiased bases (MUBs) if every pair of bases in the set is mutually unbiased. MUBs form a minimal and optimal set of orthogonal measurements for quantum state tomography~\cite{Ivanovic81, WF89}. Such bases play an important role in our understanding of complementarity in quantum mechanics~\cite{mubreview_2010}
and are central to quantum information tasks such as entanglement detection~\cite{spengler_entanglement}, information
locking~\cite{locking04}, and quantum cryptography~\cite{bb84, KWW12}.

MUBs correspond to measurement bases that are most `incompatible', as quantified by uncertainty relations~\cite{WWsurvey} and other incompatibility
measures~\cite{SB_PM13, PM_MDS14}, and, the security of quantum cryptographic tasks relies on this property of MUBs. In particular, protocols based on higher-dimensional quantum systems with larger numbers of unbiased basis sets can have certain advantages over those based on qubits~\cite{dlevel_qkd,MW11}. However, beyond the case of two measurements, being mutually unbiased is a necessary but not sufficient condition for satisfying a strong entropic lower bound~\cite{BW07}. It is therefore important for cryptographic applications to identify sets of MUBs in higher-dimensional systems that satisfy strong uncertainty relations.

The maximum number of MUBs that can exist in a $d$-dimensional Hilbert space is $d+1$ and explicit constructions of such complete sets are known when $d$ is a prime power~\cite{WF89, BBRV02, LBZ02}. However, in non-prime-power dimensions, the question of whether a complete set of MUBs exists remains unresolved. Related to the question of finding complete sets of MUBs is the important concept of {\it unextendible} sets of MUBs. A set of MUBs $\{\cB_{1}, \cB_{2}, \ldots,\cB_{m}\}$ in $\mathbb{C}^{d}$ is said to be {\bf unextendible} if there does not exist another basis in $\mathbb{C}^{d}$ that is unbiased with respect to all the bases $\cB_{j}, j=1,\ldots,m$. Examples of such unextendible sets are known in the literature~\cite{Grassl04, Brierley09, Jaming09, McNulty12, BSTW05}.

More recently, a systematic construction of such smaller sets that are unextendible to a complete set was obtained for two- and three-qubit systems~\cite{unext_MBGW}. In the case of two-qubit systems, an interesting connection was noted between unextendible sets of Pauli classes and state-independent proofs of the Kochen-Specker Theorem. It was also shown that the tightness of the an entropic uncertainty relation for any set of three MUBs in $d=4$ follows as an important consequence of the existence of weakly unextendible sets of MUBs~\cite{unext_MBGW}. The existence of similar unextendible sets was conjectured for $d=2^{n} (n>3)$. This conjecture has now been further improved upon~\cite{thas14} using a correspondence between unextendible sets of MUBs and maximal partial spreads of the polar space formed by the $n$-qubit Pauli operators~\cite{thas09}.

Here, we provide a construction of weakly unextendible sets of MUBs in prime-squared dimensions $d=p^{2}$, where $p$ is prime. Each MUB is realized as the common eigenbasis of a maximal commuting class of tensor products of the generalized Pauli operators. Our construction also brings to light an interesting connection between the existence of unextendible sets and the tightness of entropic lower bounds in these dimensions. In particular, we identify sets of $p+1$ MUBs that saturate both a Shannon and a collision EUR in $d=p^{2}$. This has important consequences for both cryptographic applications and for constructing entanglement witnesses in higher dimensional systems.

The rest of the paper is organized as follows. We begin with a brief review of the standard construction of MUBs in Sec.~\ref{sec:prelims} and formally define the notion of unextendibility. We state our main result on the construction of unextendible sets of MUBs in Sec.~\ref{sec:unext_psq} and provide proofs in the appendix (\ref{sec:appendix2}). Finally, in Sec.~\ref{sec:tightEUR},  we note the connection between the existence of unextendible MUBs and the tightness of EURs in prime-squared dimensions.  

\section{Preliminaries}\label{sec:prelims}

Our construction of unextendible MUBs is based on the well known connection between mutually unbiased bases and mutually disjoint maximal commuting operator classes~\cite{BBRV02}. Consider a set $\cS$ of $d^2$ mutually orthogonal unitary operators in a $d$-dimensional Hilbert space $\mathbb{C}^d$. Such a set constitutes a basis for $\mathbb{M}_d(\mathbb{C})$, the space of $d\times d$ complex matrices. Since at most $d$ such operators can mutually commute, we may consider a partitioning of the operator basis $\cS$ into mutually disjoint maximal commuting classes as follows.

\begin{definition}[Mutually Disjoint Maximal Commuting Classes]
 A set of subsets ${\cC_1, \cC_2, \ldots, \cC_L | \cC_j  \subset \cS\setminus\{\cI\}}$ of size $|\cC_j|=d-1$ constitutes a (partial) partitioning of $\cS \setminus\{\cI\}$ into mutually disjoint maximal commuting classes if the subsets $\cC_j$ are such that
 \begin{itemize}
  \item The elements of $\cC_j$ commute $\forall ~1 \leq j \leq L$
  \item $\cC_j \cap \cC_k = \phi ~ \forall ~ j\neq k$
 \end{itemize}
\end{definition}
The existence of such a partitioning of the operator basis $\cS$ is directly related to the existence of mutually unbiased bases. We formally state this result in the following Lemma, and refer to~\cite{BBRV02} for the proof.
\begin{lemma}\label{lem:MUB_MCC}
Let $\cS$ be any unitary operator basis for $\mathbb{M}_d(\mathbb{C})$. There exist a set of $L$ mutually unbiased bases in $\mathbb{C}^{d}$ iff the $\cS\setminus\{\cI\}$ can be partitioned into $L$ mutually disjoint maximal commuting classes. Furthermore, the MUBs are simply realized as the common eigenvectors of the different maximal commuting operator classes.
\end{lemma}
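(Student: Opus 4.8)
The plan is to prove both directions by exploiting the standard correspondence between an operator's eigenbasis and the expansion coefficients of a density matrix in the orthonormal operator basis $\cS$. I would first fix normalization conventions: assume the operators in $\cS$ are unitary, trace-orthogonal in the sense $\tr(A^{\dagger}B)=d\,\delta_{AB}$, and that $\cI\in\cS$, so that $\cS\setminus\{\cI\}$ consists of $d^{2}-1$ traceless operators. Since any Hermitian $\rho$ with $\tr\rho=1$ can be expanded as $\rho=\frac{1}{d}\bigl(\cI+\sum_{A\in\cS\setminus\{\cI\}} c_{A}\,A\bigr)$, the $d^{2}-1$ real-linear coordinates $c_{A}$ form a generalized Bloch representation, and a projector $|\psi\rangle\langle\psi|$ is characterized by the purity condition $\tr\rho^{2}=1$, i.e. $\sum_{A}|c_{A}|^{2}=d-1$.

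For the forward direction (existence of $L$ MUBs $\Rightarrow$ partition into $L$ maximal commuting classes), I would start from a maximal commuting class: the key fact is that a set of $d-1$ mutually commuting, traceless, unitary operators drawn from $\cS$ can be simultaneously diagonalized, and together with $\cI$ their common eigenprojectors $\{|e_{i}\rangle\langle e_{i}|\}_{i=1}^{d}$ form an orthonormal basis that spans the same subspace of $\mathbb{M}_{d}(\mathbb{C})$ as the class $\cC_{j}\cup\{\cI\}$. I would then show that distinct maximal commuting classes being disjoint is equivalent to their associated eigenbases being mutually unbiased: two such bases are unbiased precisely when the corresponding $d$-dimensional commuting subspaces of $\mathbb{M}_{d}(\mathbb{C})$ intersect only in $\mathrm{span}\{\cI\}$, which translates to $\cC_{j}\cap\cC_{k}=\phi$. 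Conversely, given $L$ MUBs, each basis $\cB_{j}$ defines $d$ rank-one projectors whose pairwise products and sums generate a commutative $*$-subalgebra of dimension $d$; intersecting its traceless part with $\cS$ yields a commuting class $\cC_{j}$ of size $d-1$, and the unbiasedness of the bases forces these classes to be mutually disjoint.

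The step I expect to be the main obstacle is establishing the precise dictionary between \emph{mutual unbiasedness of the eigenbases} and \emph{disjointness of the operator classes}. The delicate point is showing that the eigenprojectors of a maximal commuting class $\cC_{j}$ expand, in the Bloch picture, using only the $d-1$ operators of that class (plus $\cI$), so that the "off-class" coordinates vanish. Granting this, unbiasedness $|\langle e_{i}^{(j)}|e_{k}^{(l)}\rangle|^{2}=1/d$ becomes the statement that the Bloch vectors of projectors from $\cB_{j}$ and $\cB_{l}$ are orthogonal whenever $j\neq l$, which is exactly the condition that $\cC_{j}$ and $\cC_{l}$ occupy orthogonal coordinate subspaces, i.e.\ share no common operator. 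I would handle this by a dimension count: the $L$ disjoint classes account for $L(d-1)$ of the $d^{2}-1$ available traceless coordinates, and orthogonality of coordinate subspaces is both necessary and sufficient for the eigenbases to be mutually unbiased. Since this is precisely the content proved in~\cite{BBRV02}, I would present the argument in the streamlined form above and refer the reader there for the full computational verification.
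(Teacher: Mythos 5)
First, note that the paper itself offers no proof of this lemma: it states the result and defers entirely to \cite{BBRV02}, so the comparison is really against that standard argument, which your sketch broadly follows. Your treatment of the direction (classes $\Rightarrow$ MUBs) is sound and is the part the paper actually relies on: the key point, which you correctly isolate as the delicate step, is that each common eigenprojector of a maximal commuting class lies in the span of that class together with $\cI$ (the class together with $\cI$ spans a $d$-dimensional commutative $*$-algebra, i.e.\ a maximal abelian subalgebra, whose minimal projectors are exactly the common eigenprojectors), so all off-class Bloch coordinates vanish and disjointness of classes plus trace-orthogonality of $\cS$ gives $\tr\bigl[\,|e_i^{(j)}\rangle\langle e_i^{(j)}|e_k^{(l)}\rangle\langle e_k^{(l)}|\,\bigr]=1/d$. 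One caveat: your first formulation of the dictionary --- that two eigenbases are unbiased ``precisely when'' the corresponding commuting subspaces intersect only in $\mathrm{span}\{\cI\}$ --- is false as a general principle; trivial intersection of two maximal abelian subalgebras is strictly weaker than unbiasedness. It is rescued here only because both classes sit inside the single orthogonal basis $\cS$, where disjointness, trivial intersection of spans, and Hilbert--Schmidt orthogonality of the traceless parts all coincide. Your later formulation in terms of orthogonal coordinate subspaces is the correct one; keep that and drop the intersection criterion.

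The genuine gap is in the converse direction (MUBs $\Rightarrow$ partition of $\cS$). You claim that for each given MUB $\cB_j$, ``intersecting its traceless part with $\cS$ yields a commuting class $\cC_j$ of size $d-1$.'' Nothing guarantees this: for a fixed, pre-given unitary basis $\cS$, the maximal abelian algebra generated by an arbitrary basis $\cB_j$ may intersect $\cS$ in nothing beyond $\cI$, so no class of size $d-1$ can be extracted, and your dimension count never gets started. The standard argument in \cite{BBRV02} avoids this by building the unitaries out of the bases themselves, $U_t^{(j)}=\sum_i \omega^{ti}\,|e_i^{(j)}\rangle\langle e_i^{(j)}|$ for $t=1,\dots,d-1$, and showing that mutual unbiasedness forces the whole collection (together with $\cI$) to be pairwise trace-orthogonal; this produces \emph{some} set of orthogonal unitaries partitioned into commuting classes, but it need not be --- and in general is not --- the given $\cS$. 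In other words, the ``iff'' only holds with the operator basis existentially quantified, which is how \cite{BBRV02} states it; the lemma as written in the paper (with ``any'' fixed $\cS$) is loose, and your attempt to prove it literally runs into exactly this obstruction. A minor further point: your direction labels are swapped --- the paragraph you announce as proving MUBs $\Rightarrow$ classes in fact proves classes $\Rightarrow$ MUBs, and vice versa.
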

Since the maximum number of such classes that can be formed in $d$-dimensions is $d+1$, it follows that the number of MUBs in $\mathbb{C}^{d}$ is at most $d+1$. This bound is saturated for prime power dimensions~\cite{WF89}.

A simple example of such a unitary operator basis $\cS$ is the one comprising of products of the generalized Pauli operators acting on $\mathbb{C}^{d}$, which are defined as:
\begin{align}
\cX_{d} \vert j \rangle &= \vert (j+1) \; {\rm mod} \; d \rangle \nonumber \\
\cZ_{d} \vert j \rangle &= \omega^{j} \vert j \rangle, \label{eq:gen_Pauli}
\end{align}
where $\omega = e^{\frac{2\pi i }{d}}$. 
We will in fact make use of the unitary basis generated by the generalized Paulis in prime-dimensions for our construction of unextendible sets.


\subsection{Unextendible sets of MUBs and Maximal Commuting Operator Classes}

We now proceed to formally define the notion of unextendibility of MUBs, and the related notion of unextendible sets of operator classes.

\begin{definition}[Unextendible Sets of MUBs]
A set of MUBs $\{\cB_1, \cB_2, \ldots, \cB_{L}\}$ in $\mathbb{C}^{d}$ is said to be unextendible if there does not exist another basis in $\mathbb{C}^{d}$ which is unbiased with respect to all the bases in the set.
\end{definition}
For example, in dimension $d=6$, the eigenbases of $\cX_{6}, \cZ_{6}$ and $\cX_{6}\cZ_{6}$ were shown to be an unextendible set of MUBs~\cite{Grassl04}. This has the important consequence that the eigenbases of Weyl-Hiesenberg generators will not lead to a complete set of $7$ MUBs in $d=6$. In fact, several distinct families of unextendible triplets of MUBs have been constructed in $d=6$~\cite{Brierley09, Jaming09, McNulty12}. Moving away from six dimensions, the set of three MUBs obtained in $d=4$ using Mutually Orthogonal Latin Squares (MOLS)~\cite{WB05} is an example of an unextendible set of MUBs in prime-power dimensions~\cite{BSTW05}.

If there does not exist any vector $v \in \mathbb{C}^d$ that is unbiased with respect to the MUBs  $\{\cB_1, \cB_2, \ldots, \cB_{L}\}$ in $\mathbb{C}^d$, then the set of MUBs are said to be {\it strongly unextendible}. It has been shown that the eigenbases of $\cX_6$, $\cZ_6$ and $\cX_6\cZ_6$ are in fact strongly unextendible~\cite{Grassl04}.

A possible approach to constructing such unextendible sets of MUBs is to start with maximal commuting classes of operators which are unextendible in the following sense.
\begin{definition}[Unextendible Sets of Operator Classes]
A set of mutually disjoint maximal commuting classes $\cC_1, \cC_2, \ldots, \cC_{L}$ of operators drawn from a unitary basis $\cS$ is said to be unextendible if no other maximal class can be formed out of the remaining operators in $\cS \setminus (\{I\} \cup \bigcup_{i=1}^{L}\cC_i)$
\end{definition}
The eigenbases $\{\cB_1, \cB_2, \ldots, \cB_{L}\}$ of the operator classes $\{\cC_{1}, \cC_{2}, \ldots, \cC_{L} \subset \cS\}$ form a set of $L$ {\it weakly unextendible MUBs} in the following sense : There does not exist another basis unbiased with respect to $\{\cB_{1}, \cB_{2}, \ldots, \cB_{L}\}$ that can be obtained as the common eigenbasis of a maximal commuting class of operators in $\cS$.

For example, consider the space $\mathbb{C}^4 = \mathbb{C}^2 \otimes \mathbb{C}^2$. The Pauli operators $\cX_{2}$, $\cZ_{2}$, $\cY_{2} = i\cX_{2}\cZ_{2}$ and their tensor products give rise to a set of $16$ orthogonal two-qubit unitaries, including the identity operator $\cI_4$. It is known these can be partitioned into a set of five mutually disjoint maximal commuting classes~\cite{BBRV02, LBZ02} as for example,
\begin{eqnarray}
\cS_{1} &=& \{\cZ_{2}\otimes \cI_{2}, \cI_{2}\otimes \cZ_{2}, \cZ_{2}\otimes \cZ_{2} \}\nonumber \\
\cS_{2} &=& \{\cX_{2}\otimes \cI_{2}, \cI_{2}\otimes \cX_{2}, \cX_{2}\otimes \cX_{2}\} \nonumber \\
\cS_{3} &=& \{ \cX_{2} \otimes \cZ_{2}, \cZ_{2}\otimes \cY_{2}, \cY_{2}\otimes \cX_{2}\} \nonumber \\
\cS_{4} &=& \{\cY_{2}\otimes \cI_{2}, \cI_{2}\otimes \cY_{2}, \cY_{2}\otimes \cY_{2}\} \nonumber \\
\cS_{5} &=& \{\cY_{2}\otimes \cZ_{2}, \cZ_{2}\otimes \cX_{2}, \cX_{2} \otimes \cY_{2}\},
\end{eqnarray}
thus giving rise to a set of five MUBS in $d=4$.

Suppose we consider the following set of operator classes instead:
\begin{align}
 \cC_1 &= \{\cY_{2} \otimes \cY_{2}, \cI_{2} \otimes \cY_{2}, \cY_{2} \otimes \cI_{2}\} \nonumber \\
 \cC_2 &= \{\cY_{2} \otimes \cZ_{2}, \cZ_{2}\otimes \cX_{2}, \cX_{2}\otimes \cY_{2} \} \nonumber \\
 \cC_3 &= \{\cX_{2} \otimes \cI_{2} , \cI_{2} \otimes \cZ_{2}, \cX_{2}\otimes \cZ_{2}\}.  \label{eq:d4_example}
\end{align}
The above partitioning makes use of just $9$ of the $15$ possible two-qubit Pauli operators. It is easy to see that this partitioning gives rise to an {\it unextendible} set of classes, since is not possible to form another maximal commuting class from the remaining six operators:
\begin{equation*}
 \{\cI_{2} \otimes \cX_{2}, \cX_{2}\otimes \cX_{2}, \cY_{2}\otimes \cX_{2}\,
\cZ_{2} \otimes \cI_{2}, \cZ_{2}\otimes \cY_{2}, \cZ_{2}\otimes \cZ_{2}\}.
\end{equation*}
The common eigenbases of $\cC_{1}, \cC_{2}, \cC_{3}$ constitute a set of three weakly unextendible MUBs, as defined above. A systematic construction of such unextendible sets of classes in $d=2^{2}, 2^{3}$ was obtained recently~\cite{unext_MBGW}, and the corresponding MUBs were shown to be strongly unextendible.

\section{Unextendible sets of classes in prime-squared dimensions}\label{sec:unext_psq}

Here we examine whether it is possible to obtain a general construction of unextendible operator classes leading to unextendible MUBs in prime-power dimensions. The unitary basis of interest here is the one generated by tensor products of the generalized Paulis $\cX_{p}$ and $\cZ_{p}$ acting on a quantum systems of prime dimensions $p$ as specified in Eq.~\eqref{eq:gen_Pauli}:
\begin{align*}
\cX_{p} \vert j \rangle &= \vert (j+1) \; {\rm mod} \; p \rangle, \nonumber \\
\cZ_{p} \vert j \rangle &= \omega^{j} \vert j \rangle, \; \omega = e^{\frac{2\pi i }{p}} .
\end{align*}
In particular, restricting our attention to prime-squared dimensions ($d=p^{2}$, $p$ is prime), we consider the unitary operator basis $\cU^{(p^{2})}$  comprising operators of the form
\[ U = (\cX_{p})^{m}(\cZ_{p})^{n}\otimes(\cX_{p})^{k}(\cZ_{p})^{l}, \quad m,n,k,l \in \mathbb{F}_{p}, \]
where $\mathbb{F}_{p}$ is the prime field of order $p$. Every $U \in \cU^{(p^{2})}$ satisfies $(U)^{p} = \cI_{p^{2}}$, where $\cI_{p^{2}}$ denotes the identity operator in the $p^{2}$-dimensional space. We show via explicit construction that using the operators in $\cU^{(p^2)}\setminus\{\cI_{p^{2}}\}$ it is indeed possible to construct unextendible sets of operator classes of cardinalities 
\[ N(p) = p^{2}-p+1, p^{2}-p+2 \] 
in prime-squared dimensions.


\subsection{Structure of operator classes in $d=p^{2}$}

Our construction primarily relies on the properties of maximal commuting classes constructed out of operators in $\cU^{(p^{2})}\setminus\{\cI_{p^{2}}\}$. We first list some of these properties and prove a few simple consequences of these properties, which are useful for our construction. 

\begin{property}\label{prop:one}
Every maximal commuting class $\cC$ of operators in $\cU^{(p^2)}$ is generated by a set of $p+1$ independent operators $\{U_{1},U_{2}, U_{1}U_{2}, U_{1}^{2}U_{2}, \ldots, U_{1}^{p-1}U_{2}\} \in \cU^{(p^2)}$. 
\end{property}
Note that $U_{1}$ and $U_{2}$ are said to be {\it independent} if there do not exist $k,l \in \mathbb{F}_{p}$ such that $U_{1}^{k} = U_{2}^{l}$. To verify the above property, we first observe that
\[ [U_{1}, U_{2}] = 0 \Rightarrow [U_{1}^{k},U_{2}^{l}] = 0,  \; \forall \; k,l \in \mathbb{F}_{p}.\]
Therefore, if $U_{1}, U_{2} \in \cC$, then $U_{1}^{k}U_{2}^{l} \; \in \cC \; \forall \; k,l \in \mathbb{F}_{p}$. Furthermore, since $U_{1}^{p} = U_{2}^{p} = \cI_{p^{2}}$, this implies a cardinality of $p^{2}-1$ for the class $\cC$, as desired. 

It is easy to see that a maximal commuting class we do not need more than two operators $U_{1}, U_{2}$ to uniquely characterize a maximal commuting class. Suppose there exists $V \in \cC$ such that $V \neq U_1^{k}U_2^{l} ~ \forall~ k,l \in \mathbb{F}_p$. Since all integer powers modulo $p$ of $U_{1}, U_{2}, V$ would also commute, this would imply the class $\cC$ is of cardinality $p^{3} - 1$, which cannot exist in a space of dimension $d=p^{2}$.

We will often refer to such a set $\{U_{1},U_{2}, U_{1}U_{2}, U_{1}^{2}U_{2}, \ldots, U_{1}^{p-1}U_{2}\}$ of $p+1$ independent operators that give rise to a class $\cC$, as the {\bf generators} of the class $\cC$. Furthermore, since the class is completely determined once we pick a pair of independent, commuting operators $\{U_{1}, U_{2}\}$, we can represent $\cC$ in terms of a pair of generators as follows:
\[ \cC \equiv \langle U_1, U_2 \rangle .\]

\begin{property}\label{prop:two}
Every operator in a class commutes with exactly $p-1$ operators from another class. 
\end{property}
\begin{proof}
Consider a pair of mutually disjoint maximal commuting classes $\cC_{1}, \cC_{2}$ with generators $\{U_{1}, U_{2}\}$ and $\{V_{1}, V_{2}\}$ respectively:
\begin{eqnarray}
\cC_1 &\equiv& \langle U_1, U_2 \rangle ,\nonumber \\
\cC_2 &\equiv& \langle V_1, V_2 \rangle. \nonumber
\end{eqnarray}
We first show that any operator in $\cC_{1}$ must commute with atleast one operator in $\cC_{2}$. Suppose $U_1 \in \cC_{1}$ does not commute with any operator in $\cC_{2}$. Then the following commutation relations hold.
\[ [U_{1}, V_{1}] \neq 0 \; \Rightarrow U_{1}V_{1}  = \alpha V_{1}U_{1},\] where $\alpha$ is a $p^{th}$ root of unity. Similarly,
\[ [U_{1}, V_{2}] \neq 0 \; \Rightarrow U_{1}V_{2}  = \alpha^{j} V_{2} U_{1}, \] where $j \in \mathbb{F}_{p}$, and $j \neq 0$. These relations imply that,
\begin{equation*}
U_{1} (V_{1}^{k} V_{2}) = \alpha^{k+j} (V_{1}^{k} V_{2})U_{1} ~\forall~ k \in \mathbb{F}_{p}.
\end{equation*}
If $U_{1}$ does not commute with any element of $\cC_{2}$, we require $k+j \neq 0~mod~p~\forall~k\in \mathbb{F}_{p}$. This is not possible if $j \neq 0$. Hence $U_1 \in \cC_{1}$ must commute with at least one operator in $\cC_{2}$.

We may therefore assume without loss of generality that $[U_{1}, V_{1}] = 0$. This in turn implies that $[U_{1}, V_{1}^{k}]=0 ~ \forall k \in \mathbb{F}_p$. Hence, $U_{1} \in \cC_{1}$ commutes with $p-1$ operators from the class $\cC_2$.

Finally, we show that an operators in $\cC_{1}$ cannot commute with more than $p-1$ operators from another class. Suppose $U_{1}$ were to commute with another operator, say $V_{2}$ in $\cC_2$. This would imply $[U_{1},V_1^{k}V_2^{l}]=0 ~ \forall k,l \in \mathbb{F}_p$. In other words we would have an operator $U_{1} \in \cC_1$ which commutes with {\it all} operators in $\mathcal{C}_2$! But this would give rise to a set of commuting operators with cardinality greater $p^{2}-1$, which is not possible for operators on a $p^{2}$-dimensional space. Therefore, no operator can commute with more than $p-1$ operators in another class.
\end{proof}


In other words, every operator $U$ in a given class $\cC$ commutes with exactly one {\it independent} operator $V$ in another class $\cC'$ -- the remaining $(p-2)$ commuting operators in $\cC'$ are just powers of $V$. We also note two additional properties, which follow directly from Property~\ref{prop:two}.

\begin{property}\label{prop:three}
If $U_{1} \in \cC_{1}$, and $V_{1} \in \cC_{2}$ such that $[U_{1},V_{1}]=0$, then the operators $U_{1}^{k}V_{1}$ and $U_{1}^{l}V_{1}$ with $k \neq l \in \mathbb{F}_{p}$ must necessarily belong to different classes.
\end{property}
\begin{proof}
Note that $[U_{1}, V_{1}] = 0$ implies $[U_{1}, (U_1^k V_1)^{j}] = 0, \, \forall k,j \in \mathbb{F}_{p}$. Therefore, if $U_1^k V_1$ and $U_1^l V_1$ with $k \neq l$ were to belong to the same class $\cC$, the operator $U_{1} \in \cC_{1}$ would commute with $2(p-1)$ operators in the class $\cC$, in violation of Property~\ref{prop:two}. Hence, $U_1^{k} V_1$ and $U_1^{l} V_1$ must necessarily belong to two different classes.
\end{proof}

\begin{property}\label{prop:four}
Given two classes 
\[\cC_{1} \equiv \langle U_{1}, U_{2}\rangle, \; \cC_{2} \equiv \langle V_{1}, V_{2} \rangle, \] such that, 
\begin{eqnarray}
\left[U_{1}, V_{1}\right] &=& [U_{2}, V_{2}] = 0,  \nonumber \\ 
 \left[U_{1}, V_{2}\right]  \neq  0 &,&  [V_{1}, U_{2}] \neq 0 . \label{eq:comm_relations}  
\end{eqnarray}

Then, for a given $l \in \mathbb{F}_{p}$, there exists a unique $m \in \mathbb{F}_p$ such that $[U_1^{l}V_1\, , \,U_2^{m}V_2] = 0$.
\end{property}
\begin{proof}
Let $\omega = e^{2\pi i/p}$. Since $[V_{1}, U_{2}] \neq 0$, we may assume without loss of generality, that
\begin{eqnarray}
U_{1}V_{2} &=& \alpha V_{2}U_{1}, \; \alpha \in \{\omega, \omega^{2}, \ldots, \omega^{p-1}\}, \nonumber \\
V_{1}U_{2} &=& \alpha^{k} U_{2}V_{1}, \; k \in \mathbb{F}_{p}. \nonumber
\end{eqnarray}
Then we have, 
\[ (U_1^{l}V_1)(U_2^{m}V_2) = \alpha^{(km + l)} (U_{2}^{m}V_{2})(U_{1}^{l}V_{1}). \]
Thus, for the operators $U_1^{l}V_1$ and $U_2^{m}V_2$ to commute, we require, $km + l = 0~mod~p$. 

Finally, for a given pair $k,l \in \mathbb{F}_{p}$, we need to show that this expression holds true for a unique $m \in \mathbb{F}_p$. Let us assume that for a given $l$, $m$ is not unique. Therefore we have, $k m_1 + l = k m_2 + l = 0~mod~p$. Hence $m_1 = m_2~mod~p$, a contradiction, since  $m_1, m_2 \in \mathbb{F}_p$. Therefore $m_1 = m_2 = m \in \mathbb{F}_p$.
\end{proof}

\subsection{Existence and Construction of Unextendible MUBs}

Having noted a few basic properties of operator classes in prime-squared dimensions, we now proceed to discuss our construction of unextendible sets of classes in these dimensions. Our starting point will be a a partitioning of the unitary basis $\cU^{(p^{2})}$  into a complete set of $p^{2} +1$ mutually disjoint maximal commuting classes in $d=p^{2}$. We know that such a complete set always exists in prime-power dimensions from Lemma~\ref{lem:MUB_MCC} and the earlier results Wootters and Fields~\cite{WF89}. Starting with a complete set, we seek to identify subsets of classes whose elements might be used to construct newer classes. We first note that it suffices to restrict our attention to subsets of classes of cardinality $p+1$.

\begin{lemma}\label{lem:p+1}
Given a partitioning of the unitary basis $\cU^{(p^{2})}$ into a complete set of $p^{2} + 1$ classes, to form a new maximal commuting class of operators from $\cU^{(p^{2})}$, we require operators from exactly $p+1$ of the $p^{2} + 1$ classes. 
\end{lemma}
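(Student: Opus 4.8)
The plan is to show that a new maximal commuting class $\cC$ must draw exactly one independent generator from each of exactly $p+1$ of the existing classes. The starting observation is that a new class $\cC \equiv \langle W_1, W_2\rangle$ contains $p^2-1$ operators, all of which lie in $\cU^{(p^2)}\setminus\{\cI_{p^2}\}$. Since the original partitioning is a complete set, these $p^2-1$ operators are distributed among the $p^2+1$ existing classes $\cC_1,\ldots,\cC_{p^2+1}$. The key claim to establish is that $\cC$ cannot share more than one \emph{independent} operator with any single class $\cC_i$; equivalently, the operators of $\cC$ that fall in a given $\cC_i$ are all powers of a single operator, hence number exactly $p-1$ (or none). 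I would prove this by contradiction using Property~\ref{prop:two}: if $\cC$ and $\cC_i$ shared two independent operators, those two would generate all of $\cC_i$, forcing $\cC_i=\cC$, contradicting that $\cC$ is genuinely new.

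Granting that, each existing class contributes to $\cC$ either $0$ operators or a full cyclic block of $p-1$ operators (one independent operator together with its $p-1$ nontrivial powers, all commuting with the shared generator by Property~\ref{prop:one}). If $\cC$ meets $r$ of the existing classes nontrivially, then counting the operators of $\cC$ gives
\[
|\cC| = r(p-1) = p^2 - 1 = (p-1)(p+1),
\]
and dividing by $p-1$ forces $r = p+1$. This is the crux of the argument, and I expect the main obstacle to be rigorously justifying that the intersection $\cC \cap \cC_i$ is always either empty or a complete set of $p-1$ operators closed under taking powers, i.e.\ that no class can contribute a proper, non-cyclic subset of operators to $\cC$. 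The subtlety is ruling out the possibility that $\cC_i$ contributes, say, $W_1$ from $\cC$ but that $W_1$'s powers land partly in $\cC_i$ and partly elsewhere.

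To close that gap I would argue as follows. Pick any operator $W \in \cC \cap \cC_i$ with $W \neq \cI_{p^2}$. Because $W$ and all its powers $W, W^2, \ldots, W^{p-1}$ lie in $\cC$ (Property~\ref{prop:one} applied to $\cC$) and also mutually commute, and because $W \in \cC_i$ means $W$ commutes with every element of $\cC_i$, the powers $W^k$ each commute with all of $\cC_i$; since commuting with every element of a maximal class forces membership in that class, every $W^k \in \cC_i$. Thus the intersection is closed under taking powers and is exactly the $p-1$ element cyclic block $\{W, W^2, \ldots, W^{p-1}\}$. Combined with the no-two-independent-operators claim, this pins down each nonempty intersection to have size exactly $p-1$, and the counting identity above then yields $r = p+1$ exactly. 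I would finish by remarking that this also shows the $p+1$ contributing classes supply one independent generator each, matching the $p+1$ generators of $\cC$ guaranteed by Property~\ref{prop:one}, which is a useful consistency check for the subsequent construction.
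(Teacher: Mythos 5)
Your proof is correct, and it reaches the paper's conclusion by an argument that is the same in substance but organized differently. The paper's proof is generator-first: by Property~\ref{prop:one} the new class has $p+1$ independent generators; these must lie in classes of the complete set, and since two distinct maximal classes cannot share a pair of independent commuting operators, each generator lies in a \emph{different} existing class, giving $p+1$ contributing classes with $p-1$ operators each. You instead work intersection-first: you show that $\cC \cap \cC_i$ is closed under taking powers and contains at most one independent operator, hence is either empty or a full cyclic block of size $p-1$, and then the count $r(p-1)=p^2-1$ forces $r=p+1$. The two routes rest on identical structural facts --- your divisibility step is just the identity $(p-1)(p+1)=p^2-1$ that is implicit in the paper's generator count --- but your version is somewhat more careful about the ``exactly'' part of the statement: the paper tacitly assumes that the powers of a generator of the new class remain inside the same existing class, which is precisely the closure property you establish via maximality, so your write-up closes a small gap the paper leaves to the reader. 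One bookkeeping remark: the claim that two shared independent operators would force $\cC=\cC_i$ is really the uniqueness statement following Property~\ref{prop:one} (a maximal class is determined by any pair of independent commuting elements in it), not Property~\ref{prop:two} itself; the maximality argument you give is sound and is in fact the same one used inside the paper's proof of Property~\ref{prop:two}, but the citation should point to Property~\ref{prop:one} and its accompanying uniqueness discussion.
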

\begin{proof}
Consider a complete set $\Sigma \equiv \{\cC_{1}, \cC_{2}, \ldots, \cC_{p^{2}+1}\}$ of maximal commuting classes in dimension $d=p^2$. Now consider any new maximal commuting class $\mathcal{C}_I = \langle U_1, V_1 \rangle$ not belonging to $\Sigma$. We know from~\ref{prop:one} that there are $p+1$ independent operators characterizing the new class $\cC_{I}$. These $p+1$ operators must surely come from the classes belonging to $\Sigma$ since it is a complete partitioning of the unitary operators in $\cU^{p^{2}}$. We also know from~\ref{prop:two} that every operators in a class commutes with only one {\it independent} operators in a different class, and so each of the $p+1$ generators of $\mathcal{C}_I$ must come from different maximal commuting classes belonging
to $\Sigma$. Thus the new class $\cC_{I}$ is necessarily formed by picking $(p-1)$ operators each from $p+1$ classes.

More specifically, using ~\ref{prop:four} and the commutation relations in Eq.~\eqref{eq:comm_relations}, we can pick the $p-1$ generators of $\cC_{I}$ as follows: $U_1 \in \mathcal{C}_1$, $V_1 \in \mathcal{C}_2$, $U_1V_1 \in \mathcal{C}_3$,
$\ldots$, $U_1^{p-1}V_1 \in \mathcal{C}_{p+1}$, where $\mathcal{C}_1, \mathcal{C}_2, \ldots, \mathcal{C}_{p+1} \in \Sigma$.
\end{proof}

Suppose we do identify a set of $p+1$ classes $\{\cC_{1}, \cC_{2}, \ldots, \cC_{p+1}\}$ that belong to a complete set of classes, such that a new class $\cC_{I}$ can be formed using the operators in $\cup_{i=1}^{p+1}\cC_{i}$ , is it possible to form more classes using the same set of $p+1$ classes? This is answered in the following lemma. We merely state the result here and refer to the appendix~\ref{sec:appendix2} for the proof. 

\begin{lemma}\label{lem:atmost2}
No more than two new operator classes can be constructed from a set of $p+1$ classes belonging to a complete set of classes.
\end{lemma}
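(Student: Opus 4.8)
The plan is to recast the statement as a counting problem in the symplectic geometry of the two-qudit Pauli group. Working modulo phases, identify each operator of $\cU^{(p^2)}\setminus\{\cI_{p^2}\}$ with the nonzero vector of $\mathbb{F}_p^4$ recording its $\cX_p,\cZ_p$ exponents, so that commuting corresponds to orthogonality under the standard symplectic form. Then a maximal commuting class is a totally isotropic $2$-dimensional subspace, i.e. a line with $p+1$ points (rays) in $PG(3,p)$, and a complete set of classes is a symplectic spread of the $p^{2}+1$ such lines. By Lemma~\ref{lem:p+1} a new class meets each of $\cC_1,\dots,\cC_{p+1}$ in exactly one ray, so it is precisely a totally isotropic common transversal of these $p+1$ lines, and the lemma becomes: such a configuration admits at most two transversals.

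First I would show two distinct new classes share no operator. This is the mechanism of Property~\ref{prop:two}: if $\cC_I\neq\cC_J$ both passed through a common ray $[W]\subset\cC_1$, then, being distinct lines, they would meet $\cC_2$ in two different rays $[W_I'],[W_J']$; but $W$ commutes with $W_I'$ and with $W_J'$ (each pair lies in a single commuting class), so $W\in\cC_1$ would commute with two independent operators of $\cC_2$, contradicting Property~\ref{prop:two}. Hence any set of new classes is pairwise skew.

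Now suppose for contradiction there are three new classes $\cC_I,\cC_J,\cC_K$; they are three pairwise skew totally isotropic lines. Every $\cC_i$ meets all three, so $\cC_1,\dots,\cC_{p+1}$ are common transversals of $\cC_I,\cC_J,\cC_K$. Three pairwise skew lines of $PG(3,p)$ have exactly $p+1$ common transversals, forming one regulus of the unique hyperbolic quadric they span, while $\cC_I,\cC_J,\cC_K$ sit in the opposite regulus $\mathcal{R}$. Since there are exactly $p+1$ lines $\cC_i$, they exhaust the transversal regulus, and any new class (a transversal of $\cC_1,\cC_2,\cC_3$) must lie in $\mathcal{R}$. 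I have thus reduced everything to counting totally isotropic lines in the single regulus $\mathcal{R}$. Choosing a basis adapted to the complementary totally isotropic planes $\cC_1,\cC_2$, a line of $\mathcal{R}$ with parameter $[z_0:z_1]\in PG(1,p)$ is spanned by two explicit vectors whose symplectic product is a binary quadratic form $q(z_0,z_1)=b\,z_0^2+2a\,z_0 z_1+c\,z_1^2$, and the line is totally isotropic iff $q$ vanishes. The coefficients are Gram entries, and nondegeneracy of the symplectic form forces the Pfaffian-type quantity $a^2-bc\neq0$; hence $q$ is a nondegenerate binary form with at most two projective zeros, so $\mathcal{R}$ has at most two totally isotropic lines, contradicting the three lines $\cC_I,\cC_J,\cC_K$.

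The main obstacle is the characteristic-two case $p=2$ (i.e. $d=4$): there the cross term $2a\,z_0 z_1$ vanishes, and $q$ can collapse to the zero form even for a nondegenerate symplectic structure, so a priori all three lines of $\mathcal{R}$ could be isotropic. The delicate part will be excluding this, i.e. ruling out that the generalized-Pauli spread contains a regulus whose opposite regulus is entirely isotropic. I would settle it by a direct computation on the standard two-qubit spread, checking that every triple of classes admits a \emph{unique} totally isotropic common transversal, so the bound holds there with room to spare; a uniform treatment would instead re-derive the bound combinatorially by pushing the relation $km+l\equiv0\ (\mathrm{mod}\ p)$ of Property~\ref{prop:four} simultaneously across several of the classes $\cC_3,\dots,\cC_{p+1}$ to show that fixing the rays of two new classes over-determines any third.
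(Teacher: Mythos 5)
Your proposal reaches the right conclusion by a genuinely different route, and it is essentially sound for odd $p$. The paper's own proof is purely algebraic: it puts the $p+1$ classes in a normal form $\cC_1 \equiv \langle U_1,U_2\rangle$, $\cC_2 \equiv \langle V_1,V_2\rangle$, $\cC_j \equiv \langle U_1^{j-2}V_1, U_2^{j-2}V_2\rangle$, takes the two new classes to be $\langle U_1,V_1\rangle$ and $\langle U_2,V_2\rangle$, assumes a third class generated by $U_1^{l}U_2$ and $V_1^{m}V_2$, and matches exponents of $(U_1^{l}U_2)^{r}V_1^{m}V_2$ against the generators of some $\cC_j$; the two resulting cases die by modular arithmetic (Case I forces $s=t=0$, Case II forces $2km\equiv 0 \bmod p$, impossible for odd $p$ with $k,m \neq 0$). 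That argument is elementary and self-contained but opaque, tied to the chosen normal form, and valid only for $p>2$: the paper's closing NOTE concedes that Case II survives at $p=2$. Your symplectic-geometry argument --- new classes are isotropic common transversals, distinct new classes are skew by Property~\ref{prop:two}, three skew new classes would make $\{\cC_1,\dots,\cC_{p+1}\}$ the full transversal regulus so that every new class lives in the opposite regulus $\cR$, and $\cR$ carries at most two isotropic lines --- trades the case analysis for a classical projective count, connects the lemma to the spread/regulus picture of Thas cited in the paper, and isolates exactly why $p=2$ is exceptional (characteristic-two collapse of the discriminant, which is precisely the Mermin-square configuration). Moreover, your fallback for $d=4$ is correct --- every triple of spread classes there admits exactly one isotropic transversal --- so your route, once completed, covers all primes, whereas the paper's proof genuinely excludes $p=2$.

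One step needs tightening before this is a proof: the claim that ``nondegeneracy of the symplectic form forces $a^2-bc\neq 0$'' is not right as stated, and the symmetric cross term $2a\,z_0z_1$ is not automatic. Writing $\cR = \{\langle P,\phi(P)\rangle : P \in \cC_1\}$ for a projectivity $\phi:\cC_1\to\cC_2$ with matrix $\bigl(\begin{smallmatrix}\alpha&\beta\\ \gamma&\delta\end{smallmatrix}\bigr)$ in dual symplectic bases of the isotropic planes $\cC_1,\cC_2$, the isotropy condition for the transversal with parameter $[z_0:z_1]$ is $q(z_0,z_1)=\alpha z_0^2+(\beta+\gamma)z_0z_1+\delta z_1^2=0$. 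With only $\cC_1,\cC_2$ isotropic and the symplectic form nondegenerate, $q$ can vanish identically even for odd $p$: take $\alpha=\delta=0$, $\beta=-\gamma$, an invertible $\phi$ whose entire opposite regulus is isotropic. What saves you is the remainder of your own hypothesis: the lines of the transversal regulus are the classes $\cC_i$, and isotropy of any third one of them (the line with regulus parameter $[\lambda:\mu]$ is isotropic iff $\lambda\mu(\beta-\gamma)=0$) forces $\beta=\gamma$; only then is the cross term $2\beta z_0z_1$, and the discriminant becomes $\beta^2-\alpha\delta=-\det\phi\neq 0$ by invertibility of $\phi$, giving the at-most-two count. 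So the nondegeneracy of $q$ should be attributed to the isotropy of a third class of the regulus together with invertibility of the regulus projectivity, not to nondegeneracy of the symplectic form alone.
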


Once we have such a bound on the number of new classes that can be formed from a subset of the complete set if classes, we are naturally led to the following statement on the existence of unextendible sets of classes.
\begin{theorem} \label{th:UnextCons}
In dimensions $d=p^{2}$, there exist unextendible sets of classes of cardinality $N(p) = p^{2}-p+1$ or $N(p) = p^{2} - p + 2$, the common eigenbases of which form weakly unextendible sets of $N(p) = p^{2}-p+1$ or $N(p) = p^{2} - p + 2$ MUBs.
\end{theorem}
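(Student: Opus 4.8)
The plan is to obtain an unextendible family by performing surgery on a complete set of classes. First I would invoke Lemma~\ref{lem:MUB_MCC} together with the Wootters--Fields construction~\cite{WF89} to fix a complete partition $\Sigma = \{\cC_1,\ldots,\cC_{p^2+1}\}$ of $\cU^{(p^2)}\setminus\{\cI_{p^2}\}$ into $p^2+1$ maximal commuting classes. I would then single out a $(p+1)$-subset from which at least one new class can actually be assembled: starting from any two classes $\cC_1 \equiv \langle U_1,U_2\rangle$ and $\cC_2 \equiv \langle V_1, V_2\rangle$, Property~\ref{prop:two} guarantees a commuting pair $[U_1,V_1]=0$, and then Property~\ref{prop:three} places $U_1V_1, U_1^2V_1,\ldots,U_1^{p-1}V_1$ in $p-1$ further distinct classes $\cC_3,\ldots,\cC_{p+1}$, none equal to $\cC_1$ or $\cC_2$. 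Thus $\cC_I \equiv \langle U_1, V_1\rangle$ is a genuinely new maximal class drawing one independent generator from each of the $p+1$ classes $\cC_1,\ldots,\cC_{p+1}$, exactly as in the proof of Lemma~\ref{lem:p+1}.

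By Lemma~\ref{lem:atmost2}, the number of distinct new maximal classes contained in $\cC_1\cup\cdots\cup\cC_{p+1}$ is then either one or two, and this dichotomy is what produces the two possible cardinalities. In the first case I keep the unique new class and set $\Sigma' = \{\cC_{p+2},\ldots,\cC_{p^2+1}\}\cup\{\cC_I\}$; in the second I keep both, $\Sigma' = \{\cC_{p+2},\ldots,\cC_{p^2+1}\}\cup\{\cC_I,\cC_{I'}\}$. Since each new class is built entirely from operators of the broken classes $\cC_1,\ldots,\cC_{p+1}$, it is automatically disjoint from every untouched class $\cC_{p+2},\ldots,\cC_{p^2+1}$; together with the mutual disjointness of the (at most two) new classes established in the proof of Lemma~\ref{lem:atmost2}, this makes $\Sigma'$ a valid set of mutually disjoint maximal commuting classes, whose common eigenbases are MUBs by Lemma~\ref{lem:MUB_MCC}. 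Counting gives $|\Sigma'| = (p^2-p)+1 = p^2-p+1$ in the first case and $(p^2-p)+2 = p^2-p+2$ in the second, matching $N(p)$.

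The crux is to verify that $\Sigma'$ is unextendible, i.e.\ that the leftover operators---those elements of $\cC_1\cup\cdots\cup\cC_{p+1}$ not consumed by $\cC_I$ (and $\cC_{I'}$)---cannot be recombined into a further maximal class. I would argue by contradiction: were such a class $\cC''$ to exist, Lemma~\ref{lem:p+1} would force its $p+1$ generators to come from $p+1$ distinct classes of $\Sigma$, and since every leftover operator lies in one of the broken classes, $\cC''$ must draw exactly from $\cC_1,\ldots,\cC_{p+1}$. Being composed of leftovers, $\cC''$ is disjoint from, hence distinct from, the new classes already extracted and from every broken class (each of which has lost operators to $\cC_I$). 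It would therefore be an \emph{additional} new class inside $\cC_1\cup\cdots\cup\cC_{p+1}$, contradicting the case hypothesis of a unique new class in the first case, and contradicting the bound of Lemma~\ref{lem:atmost2} in the second. Hence no such $\cC''$ exists, $\Sigma'$ is an unextendible set of classes, and its eigenbases are weakly unextendible MUBs by definition.

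I expect the main obstacle to be making this final counting argument airtight, and in particular two points of bookkeeping underlying Lemma~\ref{lem:atmost2}: that the (at most two) new classes are genuinely \emph{mutually disjoint}, so that both may be retained simultaneously in the second case, and that ``at most two new classes'' really counts \emph{all} distinct new maximal classes embedded in $\cC_1\cup\cdots\cup\cC_{p+1}$ rather than merely a maximal disjoint sub-family. The disjointness of $\cC''$ from the extracted classes is what rules out its being a relabelling of $\cC_I$ or $\cC_{I'}$, and this is exactly what converts the bound of Lemma~\ref{lem:atmost2} into unextendibility; verifying the operator accounting---$p-1$ operators per broken class per new class, leaving $(p-1)^2$ per class in the two-class case---is the routine part that closes the argument.
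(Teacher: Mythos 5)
Your proposal is correct and follows essentially the same route as the paper: fix a complete set of $p^2+1$ classes, use Lemma~\ref{lem:p+1} and Lemma~\ref{lem:atmost2} to conclude that a suitable $(p+1)$-subset yields exactly one or exactly two new classes, and adjoin these to the untouched $p^2-p$ classes to get the unextendible set of cardinality $p^2-p+1$ or $p^2-p+2$. Your write-up is in fact more careful than the paper's own proof on the points it leaves implicit—the existence of at least one new class via Properties~\ref{prop:two} and~\ref{prop:three}, the mutual disjointness of the retained classes, and the contradiction argument showing no further class can be assembled from the leftovers.
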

\begin{proof}
Properties \ref{lem:p+1} and \ref{lem:atmost2} imply that we can form either $0$, $1$ or $2$ new classes using a set of $p+1$ classes belonging to the complete set. Let $\{\cC_{1}, \cC_{2}, \ldots, \cC_{p+1}\}$ be a set of $p+1$ classes such that exactly one new class can be formed using the operators in $\cup_{i=1}^{p+1}$.  This new classes, together with the remaining set of $p^{2}-p$ classes ($\{\cC_{p+2}, \ldots, \cC_{p^{2}+1}\}$) is an unextendible set of $p^{2}-p+1$ classes.
Suppose $\{\cC_{1}, \cC_{2}, \ldots, \cC_{p+1}\}$ were a set of $p+1$ classes such that exactly $2$ new classes can be formed using the operators in $\cup_{i=1}^{p+1}$. Then, these new classes, together with the remaining set of $p^{2}-p$ classes ($\{\cC_{p+2}, \ldots, \cC_{p^{2}+1}\}$) form an unextendible set of $N(p) = p^{2} - p + 2$ classes.
\end{proof}

For example, consider the case of $p=3$, $d=3^{3}$. In $3^{2}$-dimensions, we can construct an unextendible set of  eight classes as illustrated in Fig.~\ref{fig:unext8}. Starting with a pair of classes $\cC_{1}, \cC_{2}$, we pick the remaining set of $7$ classes required to form a complete set, in such a way that there exist two new classes $\cC_{I}, \cC_{II}$ in $\cup_{i=1}^{4}\cC_{i}$. Then, $\cC_{I}, \cC_{II}$ along with $\{\cC_{5}, \ldots, \cC_{9}\}$ is an unextendible set of eight classes. 

\subsection{The case of $p=3$}

We can further restrict the cardinality of the unextendible sets in dimension $d=3^{2}$, using certain additional properties which hold in this case. We state and prove these additional properties in Appendix~\ref{sec:appendix3}, leading to the following result.



\begin{theorem} \label{th:p3case}
In $d=3^{2}$, consider a set of four classes $\cC_{1}$, $\cC_{2}$, $\cC_{3}$ and $\cC_{4}$ that belongs to a complete set of classes. If one new class can be constructed using the operators in $\cup_{i=1}^{4}\cC_{i}$, then, it is possible to construct one more class using the same set of four classes. Therefore, the cardinality of an unextendible set of classes $N(3) \neq 7$ in $d=3^{2}$. 
\end{theorem}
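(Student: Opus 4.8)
The plan is to set up the combinatorics of $d=3^2$ explicitly and argue that the existence of one new class forces a second. Fix four classes $\cC_1, \cC_2, \cC_3, \cC_4$ from the complete set and suppose a new class $\cC_I$ can be formed from $\cup_{i=1}^{4}\cC_i$. By Lemma~\ref{lem:p+1} (specialized to $p=3$), a new class requires generators from exactly $p+1 = 4$ classes, and each new class picks $p-1 = 2$ operators from each of the four classes. So $\cC_I$ consumes two operators from each $\cC_i$, leaving two operators untouched in each class (since $|\cC_i| = p^2-1 = 8$ and $\cC_I$ also has cardinality $8 = 4 \times 2$). The core idea is to show that these leftover operators are precisely enough, and suitably constrained by the commutation structure, to assemble a \emph{second} new class $\cC_{II}$.

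First I would use Property~\ref{prop:four} to pin down the structure of $\cC_I$. Following the construction in Lemma~\ref{lem:p+1}, write the generators of $\cC_I$ as $U_1 \in \cC_1$, $V_1 \in \cC_2$, and the two ``mixed'' generators $U_1^a V_1 \in \cC_3$, $U_1^b V_1 \in \cC_4$ for distinct $a,b \in \mathbb{F}_3\setminus\{0\}$ (Property~\ref{prop:three} guarantees the mixed operators land in different classes). Here one should verify that over $\mathbb{F}_3$ the nonzero exponents are exactly $\{1,2\}$, so $\{a,b\} = \{1,2\}$ is forced. Next I would catalog, within each $\cC_i$, which two operators were used by $\cC_I$ and which two remain. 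The key structural fact to extract from Property~\ref{prop:four} is the uniqueness of the matching exponent $m$: for each $l$, commutation of $U_1^l V_1$ with $U_2^m V_2$ selects a \emph{unique} $m$. The additional properties deferred to Appendix~\ref{sec:appendix3} presumably sharpen this into a statement that the leftover operators in $\cC_3, \cC_4$ are forced to be the ``complementary'' mixed products, i.e.\ $U_1^{a'} V_1^{2}$-type or $U_2, V_2$-type operators that themselves satisfy the commutation relations of Eq.~\eqref{eq:comm_relations}.

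The heart of the argument is then a counting/closure step: I would show that the second independent generators $U_2 \in \cC_1$ and $V_2 \in \cC_2$ (the partners of $U_1, V_1$ in the sense that $\cC_1 \equiv \langle U_1, U_2\rangle$, $\cC_2 \equiv \langle V_1, V_2\rangle$), together with their leftover mixed products in $\cC_3, \cC_4$, form a commuting set of size $p^2-1 = 8$ and hence a bona fide maximal class $\cC_{II}$. Concretely, I would verify the four commutation relations of Eq.~\eqref{eq:comm_relations} hold for the pair $\{U_2, V_2\}$ after the first class has used up $\{U_1, V_1\}$ and their mixings, and then invoke Property~\ref{prop:four} once more to select the compatible mixed generators $U_2^{m} V_2$ from $\cC_3$ and $\cC_4$. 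Because $p=3$ leaves exactly two residual operators per class and exactly two nonzero exponents, the leftovers have no freedom: they must close up into a single additional class rather than being split among several, which is exactly the phenomenon that fails for larger $p$.

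The main obstacle I anticipate is the closure argument — proving that the leftover operators genuinely commute with one another and do not instead spread across more than one potential class (which would leave an unextendible set of seven). This is where the $p=3$ specialization is essential: for general $p$ the $p-1$ leftover operators per class could in principle combine into configurations admitting \emph{zero} further classes (yielding $N(p)=p^2-p+1$), but when $p=3$ the arithmetic of $\mathbb{F}_3$ forces the second class to exist whenever the first does. I would expect the appendix properties to supply precisely the rigidity (via the linear equation $km+l \equiv 0 \bmod 3$ having a forced second solution structure) needed to rule out the ``exactly one new class'' scenario, thereby establishing $N(3) \neq 7$.
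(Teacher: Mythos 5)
Your setup (the structure of $\cC_I$ via Lemma~\ref{lem:p+1}, and the aim of assembling a second class $\cC_{II} \equiv \langle U_2, V_2\rangle$ from leftover operators) matches the skeleton of the paper's argument, but the mechanism you propose for the decisive step does not work, and it rests on a miscount. Each class has $p^2-1=8$ operators, of which $\cC_I$ uses only $2$; so $6$ operators (three independent directions), not two, remain untouched in each $\cC_i$. There is therefore no ``no freedom'' rigidity: the candidate $U_2 \in \cC_1$ is not unique, and the leftovers are far more than a second class needs. More importantly, Property~\ref{prop:four} cannot do the job you assign to it. It only says that for each $l$ there is a unique exponent $m$ with $[U_1^{l}V_1, U_2^{m}V_2]=0$; it says nothing about which \emph{class} the operator $U_2^{m}V_2$ belongs to. The whole difficulty of the theorem is precisely a membership question: one must show that $U_2V_2 \in \cC_3$ and $U_2^{2}V_2 \in \cC_4$, rather than these operators sitting in some class $\cC_n$, $n \geq 5$, of the complete set --- in which case $\cC_{II}$ could not be built from $\cup_{i=1}^{4}\cC_i$ and the resulting set of seven classes would be unextendible, exactly the scenario you need to exclude.

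The paper closes this gap with two ingredients your plan does not supply. First, an explicit construction (Property~\ref{prop:eight}, with a phase normalization from Property~\ref{prop:nine}) showing that generators $U_2 \in \cC_1$, $V_2 \in \cC_2$ can be \emph{chosen} so that $U_2V_2 \in \cC_3$. Second --- and this is the crux --- a contradiction argument that genuinely uses the hypothesis that $\cC_1,\ldots,\cC_4$ extend to a complete set: if $U_2^{2}V_2 \notin \cC_4$, then the operator $U_1^{2}V_1U_2^{2}V_2$ must lie in some class $\cC_n$ with $n \geq 5$ disjoint from $\cC_1,\ldots,\cC_4$; solving the commutation and disjointness constraints over $\mathbb{F}_3$ shows that any maximal class containing $U_1^{2}V_1U_2^{2}V_2$ must also contain $U_1^{2}V_1^{2}U_2V_2 = (U_1V_1)^{2}(U_2V_2) \in \cC_3$, contradicting disjointness. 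Hence $U_2^{2}V_2 \in \cC_4$ and $\cC_{II} = \langle U_2, V_2\rangle$ exists. Your proposal never invokes completeness beyond the setup, so it cannot rule out the ``exactly one new class'' scenario; the arithmetic of $\mathbb{F}_3$ alone (two nonzero exponents) is not what forces the second class.
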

In other words, either (a) it is possible to find exactly two more classes $\cC_{I}, \cC_{II}$ using the operators in $\cup_{i=1}^{4}\cC_{i}$ giving rise to an unextendible set of eight classes (as shown in Fig.~\ref{fig:unext8}), or, (b) no new classes can be formed using the operators in $\cup_{i=1}^{4}\cC_{i}$.

%

We further show that if a set of four classes $\cC_{1}, \cC_{2}, \cC_{3}, \cC_{4}$ belongs to an unextendible set of $8$ classes in $d=3^{2}$, then, it is possible to form exactly one more class $\cC_{III}$ using the operators in $\cup_{i=1}^{4}\cC_{i}$. This implies the existence of an unextendible set of $5$ classes in $d=3^{2}$, as shown in Fig.~\ref{fig:unext5}.

\begin{figure}
    \centering
    \subfloat[Complete set of classes]{{\includegraphics[scale = 0.4]{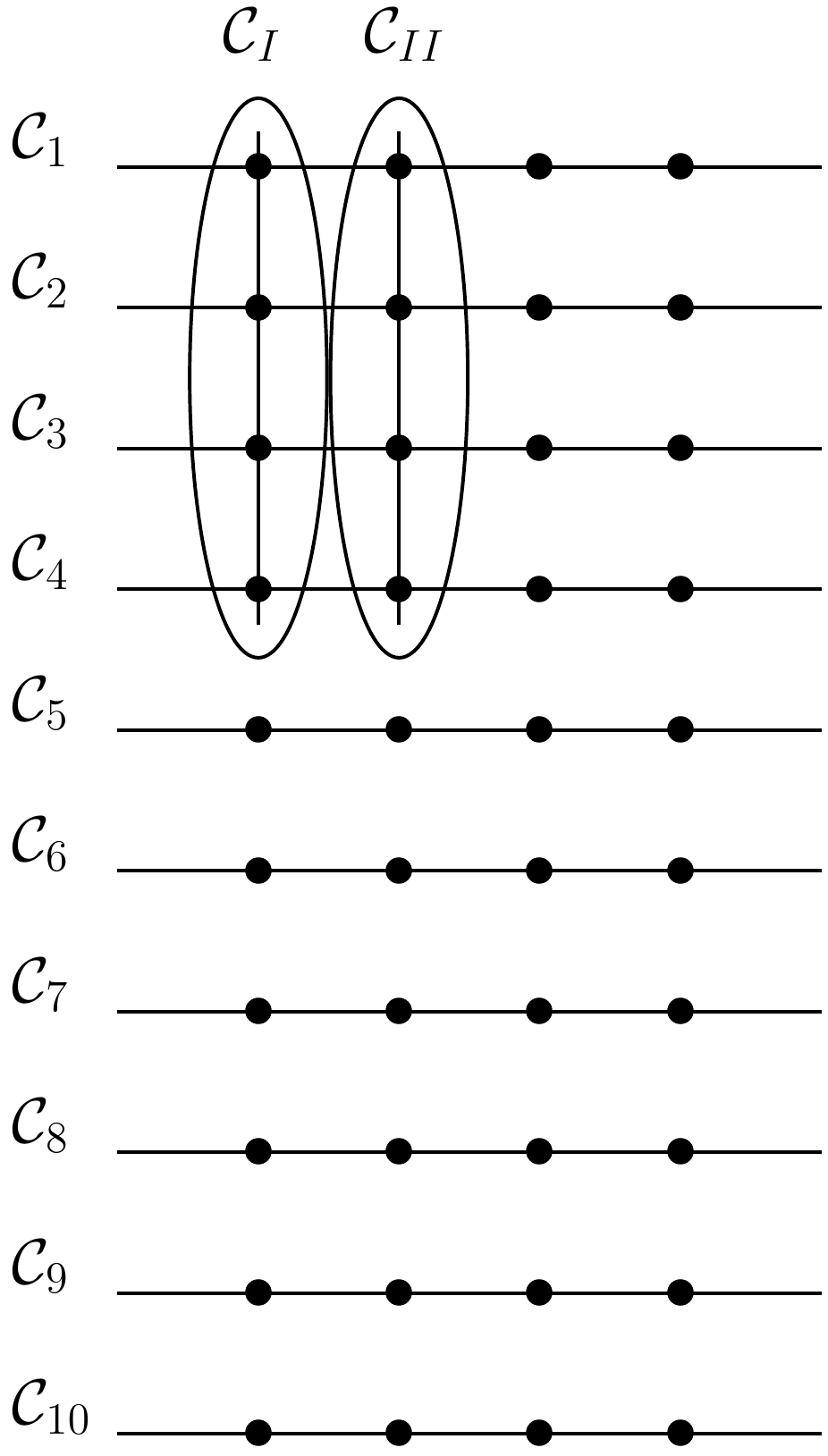} }}%
    \qquad
    \subfloat[Unextendible set of classes]{{\includegraphics[scale = 0.4]{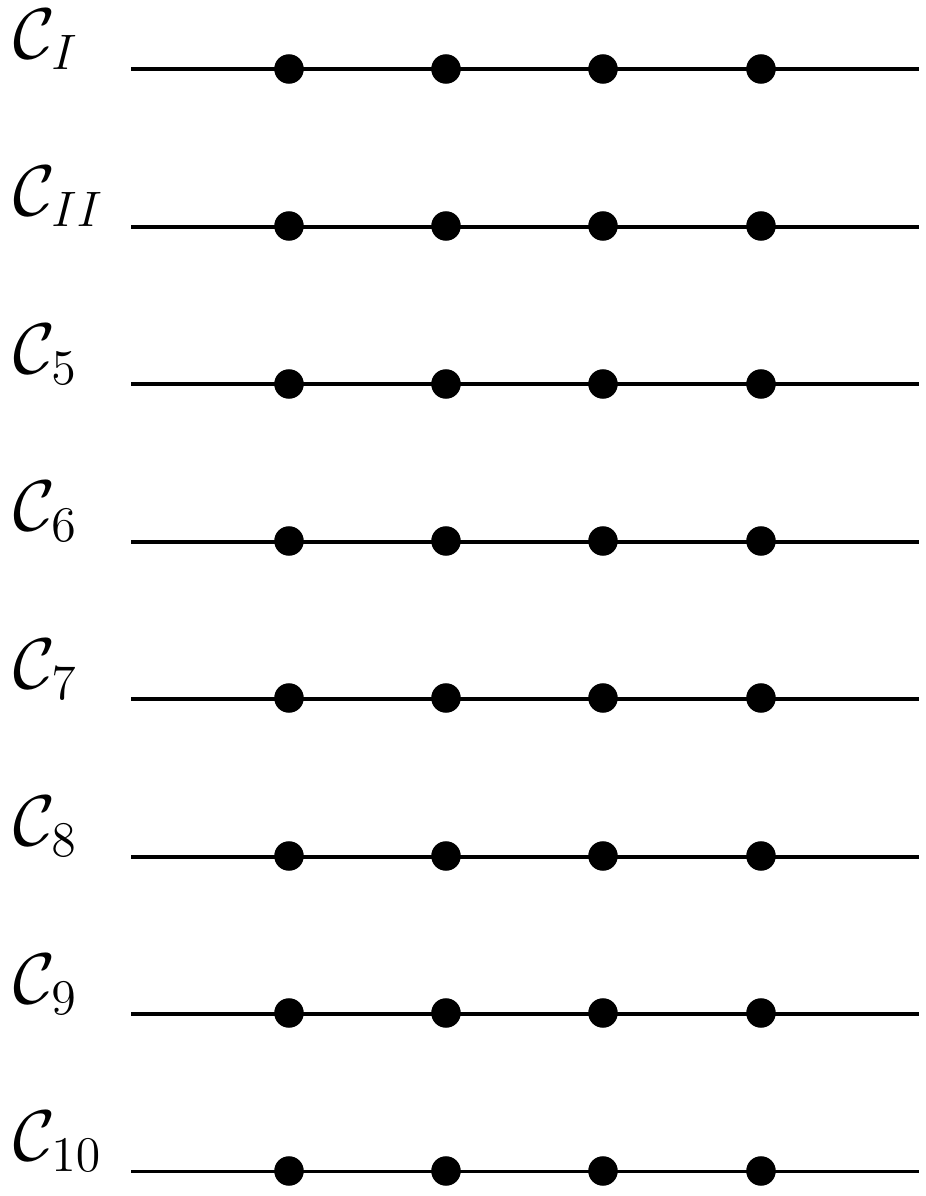} }}%
    \caption{Construction of unextendible set of $8$ classes in $d = 3^{2}$. Dots represent operators and horizontal lines represent classes. Vertical lines represent new classes constructed using existing classes. $\cC_{I}$ and $\cC_{II}$ together with $\cC_{5} - \cC_{10}$ forms an unextendible set of $8$ classes.}\label{fig:unext8}
\end{figure}
\begin{figure}
    \centering
    \subfloat[Unextendible set of $8$ classes]{{\includegraphics[scale = 0.4]{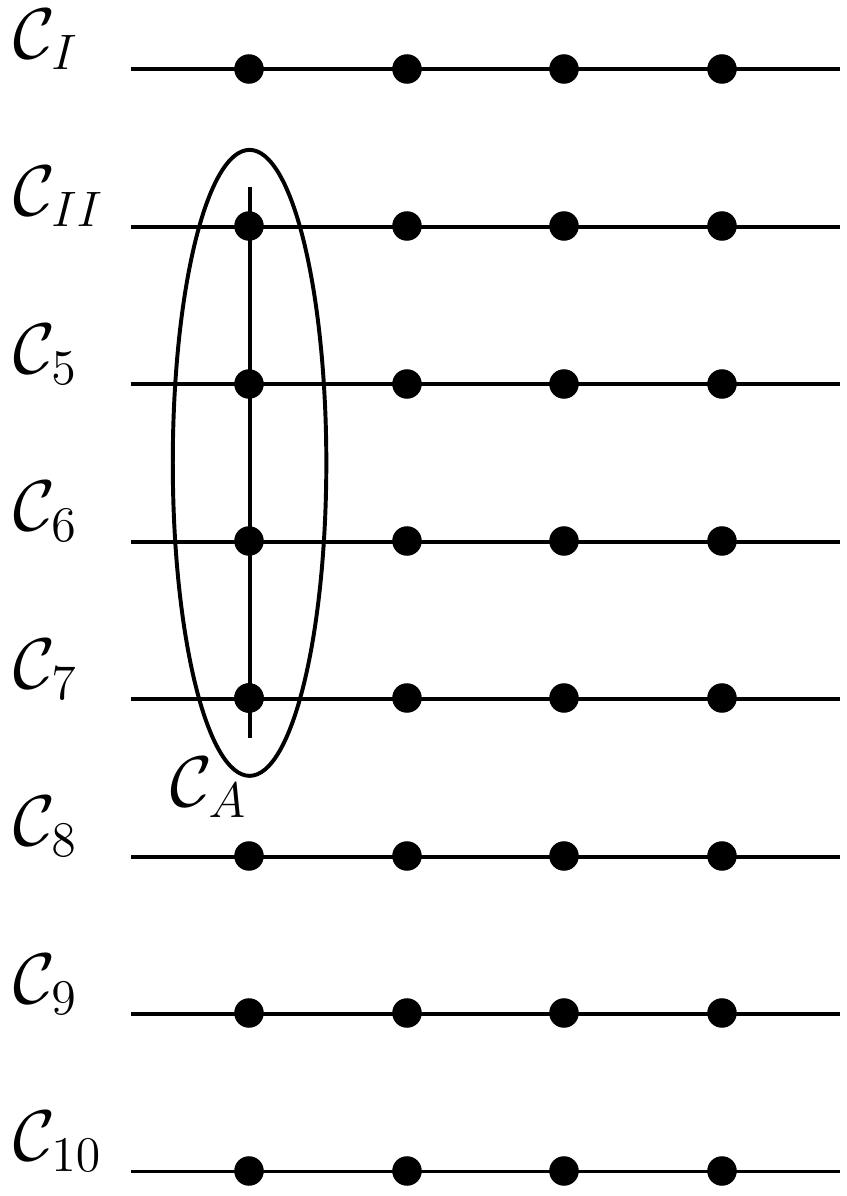} }}%
    \qquad
    \subfloat[Unextendible set of $5$ classes]{{\includegraphics[scale = 0.4]{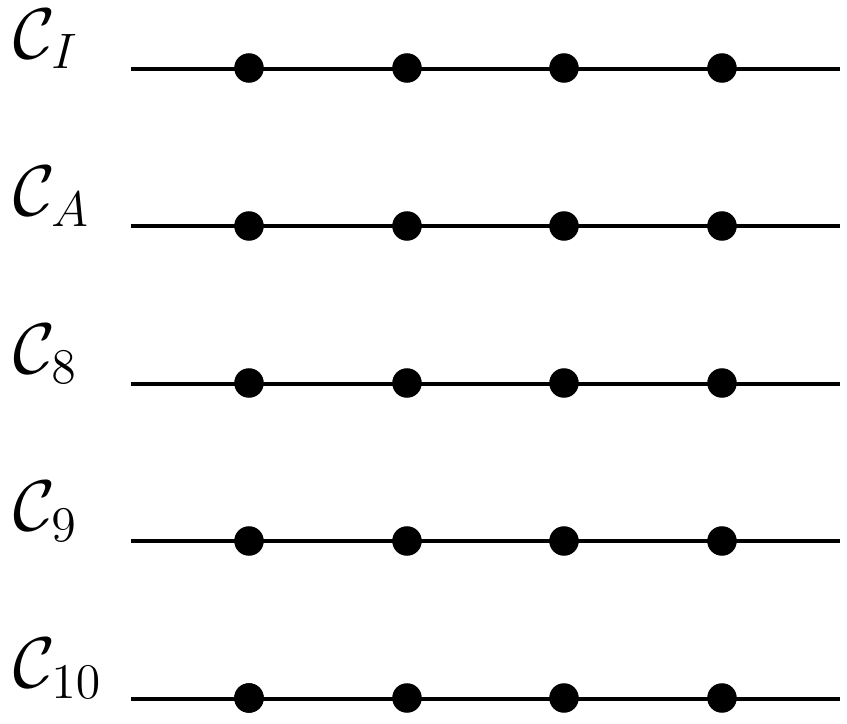} }}%
    \caption{Construction of unextendible set of $5$ classes in $d = 3^{2}$ using operators from the unextendible set of $8$ classes. Dots represent operators and horizontal lines represent classes. Vertical lines represent new classes constructed using existing classes. $\cC_{I}$ and $\cC_{A}$ together with $\cC_{8} - \cC_{10}$ forms an unextendible set of $5$ classes.}\label{fig:unext5}
\end{figure}

\section{Tightness of EURs in prime-squared dimensions}~\label{sec:tightEUR}

In this section, we describe an interesting connection between the existence of unextendible sets of classes in quantum systems of prime-squared dimensions and the tightness of two well known EURs in these dimensions. Indeed, our approach of constructing unextendible sets of classes leads to a systematic way of identifying MUBs that saturate both a Shannon ($H_{1}$) entropic and a $H_{2}$-entropic lower bound in prime-squared dimensions.  

An entropic uncertainty relation is a lower bound on the sum of the entropies associated with a set of measurement bases, measured {\it independently} on {\it identically prepared} copies of a quantum state. Recall that a measurement of basis $\cB^{(i)} \equiv \{|b^{(i)}_{j}\rangle\}$ in state $|\psi\rangle$ induces a probability distribution $p^{(i)}(j)_{|\psi\rangle} = \vert \langle b^{(i)}_{j}| \psi\rangle\vert^{2}$. Since the entropy is a measure of the {\it spread} of the distribution, we may use any valid entropic function $H(\{p^{(i)}(j)_{|\psi\rangle})\}$ to quantify the {\it uncertainty} in the outcome of the measurement $\cB^{(i)}$ in state $|\psi\rangle$. 

Here, we focus on two of the R\'enyi class of entropies~\cite{renyi:entropy}, namely the Shannon entropy ($H_{1}$, the R\'enyi entropy of order $1$) and the collision entropy ($H_{2}$, the R\'enyi entropy of order $2$), defined as:
\begin{eqnarray}
 H_{1}(\{p(i)\}) &:=& -\sum_{i}p(i)\log p(i), \nonumber \\
 H_{2}(\{p(i)\}) &:=& - \log\sum_{i}p(i)^{2} . \label{eq:entropies}
\end{eqnarray}
We denote the entropies associated with a measurement of basis $\cB^{(i)}$ on state $|\psi\rangle$ as $H_{\alpha}(\cB^{(i)}||\psi\rangle), \alpha = 1,2$. It is a well known that a pair of measurement bases $\cB^{(1)}, \cB^{(2)}$ in a $d$-dimensional system satisfy the following Shannon entropic uncertainty bound~\cite{maassen:ur}:
\begin{equation}
\frac{1}{2}\left[ H_{1}(\cB^{(1)}; |\psi\rangle) + H_{1}(\cB^{(2)}; |\psi\rangle)\right] \geq \frac{1}{2}\log d, \; \forall |\psi\rangle . \label{eq:MU}
\end{equation}
Furthermore, this bound is saturated iff the bases are mutually unbiased. As a trivial consequence of the above relation, we have a bound on the average Shannon entropy of {\it any} set of $L$ MUBs in $d$-dimensions (previously noted in~\cite{azarchs:entropy, BW07}):
\begin{equation}
\frac{1}{L}\sum_{i=1}^{L}H_{1}(\cB^{(i)}; |\psi\rangle) \geq \frac{1}{2}\log d, \; \forall |\psi\rangle. \label{eq:Shannon}
\end{equation}

It is known that there exist incomplete set of MUBs that saturate this weak entropic bound. In particular, it is known~\cite{BW07} that a set of $s+1$ MUBs in square dimensions $d=s^{2}$ which are realized by the action of product unitary operators on the computational basis saturate the bound in Eq.~\eqref{eq:Shannon}. 

Our construction of unextendible classes provides an alternate way of identifying sets of $p+1$ MUBs that saturate the EUR in Eq.~\eqref{eq:Shannon}, in prime-square dimensions ($d=p^{2}$). Furthermore, we also show that the same set of $p+1$ MUBs in $d=p^{2}$ also saturates the following $H_{2}$ entropic entropic relation, which was shown to hold for any set of $L$ MUBs in $d$-dimensions~\cite{WYM09,ww:urSurvey},
\begin{equation}
\frac{1}{L}\sum_{i=1}^{L}H_{2}(\cB^{(i)};|\psi\rangle) \geq -\log\left(\frac{L+d-1}{Ld}\right). \label{eq:H2}
\end{equation}
Indeed, the two lower bounds in Eqs.~\eqref{eq:Shannon} and~\eqref{eq:H2} coincide for the case of $L=p+1$ MUBs in $d=p^{2}$. 

\begin{theorem}\label{th:tightnessEUR}
In dimension $d = p^2$, consider a set of $(p+1)$ classes $\{\mathcal{C}_1, \mathcal{C}_2,\ldots, \mathcal{C}_{p+1}\}$, such that at least one more maximal commuting class $\mathcal{C_I}$ can be constructed by picking $(p-1)$ operators (of the form $U, U^2, \ldots, U^{p-1}$) from each of the classes. Then, the MUBs $\{\cB_{1}, \cB_{2}, \ldots, \cB_{p+1}\}$ corresponding to the classes $\{\cC_{1}, \cC_{2}, \ldots, \cC_{p+1}\}$ {\bf saturate} the following entropic uncertainty relations:
  \begin{eqnarray}
    \frac{1}{p+1} \sum_{i=1}^{p+1}H_2(\mathcal{B}_i||\psi \rangle) &\geq  \log_2  p , \nonumber \\
    \frac{1}{p+1} \sum_{i=1}^{p+1}H_{1}(\mathcal{B}_i||\psi \rangle) &\geq \log_2 p, \label{eq:EUR_bounds}
   \end{eqnarray}
with the lower bound attained by the common eigenstates of the newly constructed class $\cC_{I}$.
\end{theorem}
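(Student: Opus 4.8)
The plan is to reduce the whole statement to a single structural fact about the induced measurement distributions and then settle that fact by a short stabilizer--group computation. First I would record that for $L=p+1$ and $d=p^{2}$ both right-hand sides in Eq.~\eqref{eq:EUR_bounds} collapse to the same number: the Shannon bound in Eq.~\eqref{eq:Shannon} gives $\tfrac12\log_{2}p^{2}=\log_{2}p$, and the collision bound in Eq.~\eqref{eq:H2} gives $-\log_{2}\frac{(p+1)+p^{2}-1}{(p+1)p^{2}}=-\log_{2}\frac1p=\log_{2}p$. Since the inequalities themselves are the known EURs, the only new content is \emph{equality} at the eigenstates of $\cC_{I}$. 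I would then observe that it suffices to prove the following claim: for every common eigenstate $|\psi\rangle$ of $\cC_{I}$ and every $i\in\{1,\dots,p+1\}$, the distribution $p^{(i)}(j)_{|\psi\rangle}=|\langle b^{(i)}_{j}|\psi\rangle|^{2}$ is uniform over exactly $p$ of the $p^{2}$ outcomes and zero on the rest. Granting the claim, each term satisfies $H_{1}(\cB_{i}\||\psi\rangle)=H_{2}(\cB_{i}\||\psi\rangle)=\log_{2}p$, so both averages equal $\log_{2}p$ and the bounds are saturated; the monotonicity $H_{2}\le H_{1}$ then serves only as a consistency check.

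Next I would pin down the shared structure of $\cC_{I}$ and $\cC_{i}$. By Lemma~\ref{lem:p+1} the new class is assembled by taking from each $\cC_{i}$ precisely a family $W_{i},W_{i}^{2},\dots,W_{i}^{p-1}$ generated by one operator $W_{i}\in\cC_{i}$; hence $\cC_{I}\cap\cC_{i}=\{W_{i}^{k}:1\le k\le p-1\}$ and, together with the identity, these span the cyclic group $\langle W_{i}\rangle$ of order $p$ common to both classes. Because $W_{i}^{p}=\cI_{p^{2}}$ and $W_{i}\neq\cI_{p^{2}}$, the operator $W_{i}$ has $p$ distinct eigenvalues, each with a $p$-dimensional eigenspace, and both $\cB_{i}$ and $\cB_{I}$ are adapted to this eigenspace decomposition. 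Consequently any $|\psi\rangle\in\cB_{I}$ lies in a single $W_{i}$-eigenspace and is orthogonal to the $p(p-1)$ vectors of $\cB_{i}$ sitting in the other eigenspaces, so the support of $p^{(i)}(\cdot)_{|\psi\rangle}$ has size at most $p$.

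The heart of the argument is to show the surviving $p$ probabilities all equal $1/p$. For this I would write each rank-one projector as a stabilizer sum over the corresponding maximal abelian Pauli group $\cG$ of order $p^{2}$, namely $|b\rangle\langle b|=\tfrac1{p^{2}}\sum_{g\in\cG}\overline{\chi_{b}(g)}\,g$ with $\chi_{b}$ the eigenvalue character of $|b\rangle$, and compute the overlap as a trace, $|\langle b^{(i)}_{j}|\psi\rangle|^{2}=\tr\!\big(|b^{(i)}_{j}\rangle\langle b^{(i)}_{j}|\,|\psi\rangle\langle\psi|\big)$. Using $\tr(gh)=p^{2}$ when $gh\propto\cI_{p^{2}}$ and $\tr(gh)=0$ otherwise, the double sum collapses onto the intersection $\cG_{i}\cap\cG_{I}=\langle W_{i}\rangle$, leaving $|\langle b^{(i)}_{j}|\psi\rangle|^{2}=\tfrac1{p^{2}}\sum_{k=0}^{p-1}(\lambda_{I}/\lambda_{i})^{k}$, where $\lambda_{i},\lambda_{I}$ are the $W_{i}$-eigenvalues of $|b^{(i)}_{j}\rangle$ and $|\psi\rangle$. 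Since $\lambda_{I}/\lambda_{i}$ is a $p$-th root of unity, this geometric sum equals $1/p$ exactly when $\lambda_{i}=\lambda_{I}$ and vanishes otherwise. Thus the $p$ in-support vectors—those sharing the $W_{i}$-eigenvalue of $|\psi\rangle$—each carry probability $1/p$, which completes the claim.

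I expect the main obstacle to be this trace/stabilizer computation, i.e.\ justifying that two operator classes whose eigenbases share the cyclic group $\langle W_{i}\rangle$ are mutually unbiased \emph{within} each common $W_{i}$-eigenspace (equivalently, that the Lagrangian subspaces representing $\cC_{i}$ and $\cC_{I}$ meet in a line, forcing pairwise overlaps to lie in $\{0,1/p\}$). Everything else—the coincidence of the two lower bounds, the reduction to a uniform distribution, and the eigenspace bookkeeping—is routine once this overlap dichotomy is in hand. Assembling the pieces then yields $\frac1{p+1}\sum_{i}H_{\alpha}(\cB_{i}\||\psi\rangle)=\log_{2}p$ for $\alpha=1,2$ at every eigenstate of $\cC_{I}$, which is exactly the asserted saturation.
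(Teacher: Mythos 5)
Your proposal is correct and follows essentially the same route as the paper's appendix proof: both express the rank-one projectors of each eigenbasis as character-weighted sums over the corresponding maximal abelian class, use trace orthogonality of the generalized Paulis to collapse the overlap $\tr[\,|b_i^{\bf x}\rangle\langle b_i^{\bf x}|\psi\rangle\langle\psi|\,]$ onto the shared cyclic subgroup $\langle W_i\rangle$, and conclude that each induced distribution is uniform on exactly $p$ outcomes, so that $H_1 = H_2 = \log_2 p$ for every basis and both bounds are saturated. The only difference is presentational: you invoke the standard stabilizer-sum formula for the projectors, whereas the paper derives that parameterization from scratch (its lemma establishing $(\rho_i^{\bf x})^2 = \rho_i^{\bf x}$ and the orthogonality of the $p^2$ resulting states).
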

We provide the proof for this theorem in the appendix (\ref{sec:appendix4}). We may note that the states saturating the bound in Eq.~\eqref{eq:EUR_bounds} are indeed states that {\it look alike}~\cite{wootters:stateslooksame} with respect to each of the bases in the set $\{\cB_{1}, \cB_{2}, \ldots, \cB_{p+1}\}$.

\section{Concluding Remarks} 
We show by explicit construction the existence of weakly unextendible sets of MUBs of cardinalities $N(p) = p^{2}-p+1, p^{2} - p + 2$ in prime squared ($d=p^{2}$) dimensions. Our construction is based on grouping the generalized Pauli operators in these dimensions into sets of mutually disjoint, maximal commuting classes that are unextendible to a complete set of $(d+1)$ classes. We further demonstrate a general connection between the existence of unextendible sets and the tightness of entropic uncertainty relations for the $H_{1}$ and $H_{2}$ R\'enyi entropies.

Numerical evidence suggests that the MUBs obtained in our construction are in fact strongly unextendible. This is also borne out by a recent construction of unextendible sets of MUBs by exploring the connection between MUBs and complementary decompositions of maximal abelian subalgebras~\cite{szanto15}. Finally, it remains an interesting question to determine the cardinality of unextendible sets in prime-power dimensions ($d=p^{n}$). Since the techniques used here can be easily generalized to the case of prime-power dimensions, we may conjecture that the connection between the existence of unextendible sets and tightness of EURs can be extended to these dimensions as well.


\appendix

\section{Unextendible set of $8$ operator classes in $d=3^{2}$} \label{sec:appendix1}
Here is an example of a complete set of classes in $d = 3^2$ dimensions:
\begin{widetext}
    \begin{eqnarray}
    \mathcal{C}_{1} &=& \left\{ I\otimes \cX_{3}\cZ_{3}^{2},I\otimes \cX_{3}^{2}\cZ_{3},\cX_{3}\cZ_{3}\otimes I,\cX_{3}\cZ_{3}\otimes \cX_{3}\cZ_{3}^{2},\cX_{3}\cZ_{3}\otimes \cX_{3}^{2}\cZ_{3},\cX_{3}^{2}\cZ_{3}^{2}\otimes I,\cX_{3}^{2}\cZ_{3}^{2}\otimes \cX_{3}\cZ_{3}^{2},\cX_{3}^{2}\cZ_{3}^{2}\otimes \cX_{3}^{2}\cZ_{3}\right\} \nonumber \\
    \mathcal{C}_{2} &=& \left\{ \cZ_{3}\otimes \cX_{3}\cZ_{3}^{2},\cZ_{3}^{2}\otimes \cX_{3}^{2}\cZ_{3},\cX_{3}\otimes \cX_{3},\cX_{3}\cZ_{3}\otimes \cX_{3}^{2}\cZ_{3}^{2},\cX_{3}\cZ_{3}^{2}\otimes \cZ_{3},\cX_{3}^{2}\otimes \cX_{3}^{2},\cX_{3}^{2}\cZ_{3}\otimes \cZ_{3}^{2},\cX_{3}^{2}\cZ_{3}^{2}\otimes \cX_{3}\cZ_{3}\right\} \nonumber \\
    \mathcal{C}_{3} &=& \left\{ \cZ_{3}\otimes \cZ_{3}^{2},\cZ_{3}^{2}\otimes \cZ_{3},\cX_{3}\otimes \cX_{3}\cZ_{3},\cX_{3}^{2}\otimes \cX_{3}^{2}\cZ_{3}^{2},\cX_{3}\cZ_{3}^{2}\otimes \cX_{3}\cZ_{3}^{2},\cX_{3}^{2}\cZ_{3}\otimes \cX_{3}^{2}\cZ_{3},\cX_{3}^{2}\cZ_{3}\otimes \cX_{3}^{2}\cZ_{3},\cX_{3}\cZ_{3}^{2}\otimes \cX_{3}\cZ_{3}^{2}\right\} \nonumber \\
    \mathcal{C}_{4} &=& \left\{ I\otimes \cZ_{3},I\otimes \cZ_{3}^{2},\cX_{3}\otimes I,\cX_{3}\otimes \cZ_{3},\cX_{3}\otimes \cZ_{3}^{2},\cX_{3}^{2}\otimes I,\cX_{3}^{2}\otimes \cZ_{3},\cX_{3}^{2}\otimes \cZ_{3}^{2}\right\} \nonumber \\
    \mathcal{C}_{5} &=& \left\{ \cZ_{3}\otimes \cX_{3}^{2}\cZ_{3},\cZ_{3}^{2}\otimes \cX_{3}\cZ_{3}^{2},\cX_{3}\otimes \cX_{3}^{2},\cX_{3}\cZ_{3}\otimes \cX_{3}\cZ_{3},\cX_{3}\cZ_{3}^{2}\otimes \cZ_{3}^{2},\cX_{3}^{2}\otimes \cX_{3},\cX_{3}^{2}\cZ_{3}\otimes \cZ_{3},\cX_{3}^{2}\cZ_{3}^{2}\otimes \cX_{3}^{2}\cZ_{3}^{2}\right\} \nonumber \\
    \mathcal{C}_{6} &=& \left\{ I\otimes \cX_{3},I\otimes \cX_{3}^{2},\cX_{3}\cZ_{3}^{2}\otimes I,\cX_{3}\cZ_{3}^{2}\otimes \cX_{3},\cX_{3}\cZ_{3}^{2}\otimes \cX_{3}^{2},\cX_{3}^{2}\cZ_{3}\otimes I,\cX_{3}^{2}\cZ_{3}\otimes \cX_{3},\cX_{3}^{2}\cZ_{3}\otimes \cX_{3}^{2}\right\} \nonumber \\
    \mathcal{C}_{7} &=& \left\{ I\otimes \cX_{3}\cZ_{3},I\otimes \cX_{3}^{2}\cZ_{3}^{2},\cZ_{3}\otimes I,\cZ_{3}\otimes \cX_{3}\cZ_{3},\cZ_{3}\otimes \cX_{3}^{2}\cZ_{3}^{2},\cZ_{3}^{2}\otimes I,\cZ_{3}^{2}\otimes \cX_{3}\cZ_{3},\cZ_{3}^{2}\otimes \cX_{3}^{2}\cZ_{3}^{2}\right\} \nonumber \\
    \mathcal{C}_{8} &=& \left\{ \cZ_{3}\otimes \cZ_{3},\cZ_{3}^{2}\otimes \cZ_{3}^{2},\cX_{3}\otimes \cX_{3}^{2}\cZ_{3}^{2},\cX_{3}\cZ_{3}\otimes \cX_{3}^{2},\cX_{3}\cZ_{3}^{2}\otimes \cX_{3}^{2}\cZ_{3},\cX_{3}^{2}\otimes \cX_{3}\cZ_{3},\cX_{3}^{2}\cZ_{3}\otimes \cX_{3}\cZ_{3}^{2},\cX_{3}^{2}\cZ_{3}^{2}\otimes \cX_{3}\right\} \nonumber \\
    \mathcal{C}_{9} &=& \left\{ \cZ_{3}\otimes \cX_{3},\cZ_{3}^{2}\otimes \cX_{3}^{2},\cX_{3}\otimes \cX_{3}^{2}\cZ_{3},\cX_{3}\cZ_{3}\otimes \cZ_{3},\cX_{3}\cZ_{3}^{2}\otimes \cX_{3}\cZ_{3},\cX_{3}^{2}\otimes \cX_{3}\cZ_{3}^{2},\cX_{3}^{2}\cZ_{3}\otimes \cX_{3}^{2}\cZ_{3}^{2},\cX_{3}^{2}\cZ_{3}^{2}\otimes \cZ_{3}^{2}\right\} \nonumber \\
    \mathcal{C}_{10} &=& \left\{ \cZ_{3}\otimes \cX_{3}^{2},\cZ_{3}^{2}\otimes \cX_{3},\cX_{3}\otimes \cX_{3}\cZ_{3}^{2},\cX_{3}\cZ_{3}\otimes \cZ_{3}^{2},\cX_{3}\cZ_{3}^{2}\otimes \cX_{3}^{2}\cZ_{3}^{2},\cX_{3}^{2}\otimes \cX_{3}^{2}\cZ_{3},\cX_{3}^{2}\cZ_{3}\otimes \cX_{3}\cZ_{3},\cX_{3}^{2}\cZ_{3}^{2}\otimes \cZ_{3}\right\} \nonumber
    \end{eqnarray}
 We see that from $\mathcal{C}_{1}, \mathcal{C}_{2}, \mathcal{C}_{5}, \mathcal{C}_{7}$, we can form two new classes:
    \begin{eqnarray}
    \mathcal{C}_{I} &=& \left\{ I\otimes \cX_{3}\cZ_{3}^{2},I\otimes \cX_{3}^{2}\cZ_{3},\cZ_{3}\otimes \cX_{3}\cZ_{3}^{2},\cZ_{3}^{2}\otimes \cX_{3}^{2}\cZ_{3},\cZ_{3}\otimes \cX_{3}^{2}\cZ_{3},\cZ_{3}^{2}\otimes \cX_{3}\cZ_{3}^{2},\cZ_{3}\otimes I,\cZ_{3}^{2}\otimes I\right\} \nonumber \\
    \mathcal{C}_{II} &=& \left\{ \cX_{3}\cZ_{3}\otimes I,\cX_{3}^{2}\cZ_{3}^{2}\otimes I,\cX_{3}\cZ_{3}\otimes \cX_{3}^{2}\cZ_{3}^{2},\cX_{3}^{2}\cZ_{3}^{2}\otimes \cX_{3}\cZ_{3},\cX_{3}\cZ_{3}\otimes \cX_{3}\cZ_{3},\cX_{3}^{2}\cZ_{3}^{2}\otimes \cX_{3}^{2}\cZ_{3}^{2},I\otimes \cX_{3}\cZ_{3},I\otimes \cX_{3}^{2}\cZ_{3}^{2}\right\} \nonumber
    \end{eqnarray}
    Therefore $\cC_{I}$, $\cC_{II}$, $\cC_{3}$, $\cC_{4}$, $\cC_{6}$, $\cC_{8}$, $\cC_{9}$, $\cC_{10}$ form a set of
    $8$ unextendible maximal commuting classes.\\
     We make use operators from $4$ of the above $8$ classes (Where either $\cC_{I}$ or $\cC_{II}$ are necessarily used) to form a new class. This new class along with the remaining set of $4$ classes form a set of $5$ classes which are unextendible to a complete set of classes. In particular, we make use of the classes $\cC_{I}$, $\cC_{3}$, $\cC_{4}$ and $\cC_{8}$ to form a new class $\cC_{A}$ given by:
     \begin{eqnarray}
     \mathcal{C}_{A} &=& \left\{ I\otimes \cZ_{3}, \cZ_{3}\otimes I,\cZ_{3}\otimes \cZ_{3}, \cZ_{3}^{2}\otimes \cZ_{3}, I\otimes \cZ_{3}^{2},\cZ_{3}^{2}\otimes I,\cZ_{3}^{2}\otimes \cZ_{3}^{2},\cZ_{3}\otimes \cZ_{3}^{2}\right\} \nonumber 
     \end{eqnarray}
 The classes $\cC_{II}$, $\cC_{6}$, $\cC_{9}$, $\cC_{10}$ and $\cC_{A}$ form a set of $5$ classes which are unextendible to a complete set of classes as given below:
        \begin{eqnarray}
        \mathcal{C}_{II} &=& \left\{ \cX_{3}\cZ_{3}\otimes I,\cX_{3}^{2}\cZ_{3}^{2}\otimes I,\cX_{3}\cZ_{3}\otimes \cX_{3}^{2}\cZ_{3}^{2},\cX_{3}^{2}\cZ_{3}^{2}\otimes \cX_{3}\cZ_{3},\cX_{3}\cZ_{3}\otimes \cX_{3}\cZ_{3},\cX_{3}^{2}\cZ_{3}^{2}\otimes \cX_{3}^{2}\cZ_{3}^{2},I\otimes \cX_{3}\cZ_{3},I\otimes \cX_{3}^{2}\cZ_{3}^{2}\right\} \nonumber \\
         \mathcal{C}_{6} &=& \left\{ I\otimes \cX_{3},I\otimes \cX_{3}^{2},\cX_{3}\cZ_{3}^{2}\otimes I,\cX_{3}\cZ_{3}^{2}\otimes \cX_{3},\cX_{3}\cZ_{3}^{2}\otimes \cX_{3}^{2},\cX_{3}^{2}\cZ_{3}\otimes I,\cX_{3}^{2}\cZ_{3}\otimes \cX_{3},\cX_{3}^{2}\cZ_{3}\otimes \cX_{3}^{2}\right\} \nonumber \\
         \mathcal{C}_{9} &=& \left\{ \cZ_{3}\otimes \cX_{3},\cZ_{3}^{2}\otimes \cX_{3}^{2},\cX_{3}\otimes \cX_{3}^{2}\cZ_{3},\cX_{3}\cZ_{3}\otimes \cZ_{3},\cX_{3}\cZ_{3}^{2}\otimes \cX_{3}\cZ_{3},\cX_{3}^{2}\otimes \cX_{3}\cZ_{3}^{2},\cX_{3}^{2}\cZ_{3}\otimes \cX_{3}^{2}\cZ_{3}^{2},\cX_{3}^{2}\cZ_{3}^{2}\otimes \cZ_{3}^{2}\right\} \nonumber \\
         \mathcal{C}_{10} &=& \left\{ \cZ_{3}\otimes \cX_{3}^{2},\cZ_{3}^{2}\otimes \cX_{3},\cX_{3}\otimes \cX_{3}\cZ_{3}^{2},\cX_{3}\cZ_{3}\otimes \cZ_{3}^{2},\cX_{3}\cZ_{3}^{2}\otimes \cX_{3}^{2}\cZ_{3}^{2},\cX_{3}^{2}\otimes \cX_{3}^{2}\cZ_{3},\cX_{3}^{2}\cZ_{3}\otimes \cX_{3}\cZ_{3},\cX_{3}^{2}\cZ_{3}^{2}\otimes \cZ_{3}\right\} \nonumber \\
		\mathcal{C}_{A} &=& \left\{ I\otimes \cZ_{3}, \cZ_{3}\otimes I,\cZ_{3}\otimes \cZ_{3}, \cZ_{3}^{2}\otimes \cZ_{3}, I\otimes \cZ_{3}^{2},\cZ_{3}^{2}\otimes I,\cZ_{3}^{2}\otimes \cZ_{3}^{2},\cZ_{3}\otimes \cZ_{3}^{2}\right\} \nonumber 
    \end{eqnarray}
    \end{widetext}

\section{Construction of unextendible sets of operator classes in $d=p^{2}$} \label{sec:appendix2}

\begin{lemma}[\ref{lem:atmost2}]
No more than $2$ classes can be constructed using a set of $p+1$ classes belonging to a complete set of classes.
\end{lemma}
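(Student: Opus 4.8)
The plan is to translate the statement into the symplectic geometry of $\mathbb{F}_{p}^{4}$ and then reduce the count to the number of roots of a single binary quadratic form. Identifying each operator in $\cU^{(p^{2})}\setminus\{\cI_{p^{2}}\}$ (up to phase) with a nonzero vector of $V=\mathbb{F}_{p}^{4}$, the relation $[\,\cdot\,,\cdot\,]=0$ becomes the vanishing of the natural symplectic form $\omega$, a maximal commuting class becomes a Lagrangian (maximal isotropic, $2$-dimensional) subspace, and a complete set becomes a symplectic spread $\{L_{1},\dots,L_{p^{2}+1}\}$ partitioning $V\setminus\{0\}$. A class built from $\cC_{1},\dots,\cC_{p+1}$ is then a Lagrangian $M\subseteq L_{1}\cup\cdots\cup L_{p+1}$ distinct from each $L_{i}$; by Lemma~\ref{lem:p+1} such an $M$ meets every $L_{i}$ in a line $M\cap L_{i}$. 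Since a new class is recovered from the line $M\cap L_{1}$ together with the direction forced on it, the assignment $M\mapsto M\cap L_{1}$ is injective, so it suffices to bound by $2$ the number of admissible lines $\langle u\rangle\subset L_{1}$.

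First I would fix coordinates adapted to two of the classes. Take $L_{1},L_{2}$ as complementary Lagrangians, so $V=L_{1}\oplus L_{2}$, and write the remaining chosen classes as graphs $L_{j}=\{x+S_{j}x:x\in L_{1}\}$ of symmetric maps $S_{j}:L_{1}\to L_{2}$ with $S_{1}=0$; the spread hypothesis forces every difference $S_{j}-S_{k}$, hence $S_{j}=S_{j}-S_{1}$ itself, to be invertible. Writing $M=\langle u\rangle\oplus\langle w\rangle$ with $u\in L_{1}$, $w\in L_{2}$, each of the $p-1$ ``mixed'' lines $\langle u+sw\rangle$ of $M$ lies in some $L_{j}$ ($j\ge 3$) precisely when $S_{j}u=sw$. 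Because $S_{j}-S_{k}$ is invertible, the vectors $S_{3}u,\dots,S_{p+1}u$ are distinct and nonzero (compare Property~\ref{prop:three}), so the condition $M\subseteq\bigcup_{i}L_{i}$ is equivalent to all of $S_{3}u,\dots,S_{p+1}u$ lying on the single line $\langle w\rangle$, i.e.\ being pairwise parallel. The Lagrangian condition on $M$ reduces, via the pairing $L_{2}\cong L_{1}^{*}$ induced by $\omega$ and the symmetry of $S_{3}$, to the single scalar equation $u^{T}S_{3}u=0$.

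The final step is a dichotomy. If two of the maps, say $S_{3}$ and $S_{4}$, are not scalar multiples of one another, then $\{\langle u\rangle : S_{3}u\parallel S_{4}u\}$ is the zero locus of the binary quadratic form $u\mapsto\det[\,S_{3}u\mid S_{4}u\,]$, which is not identically zero and hence vanishes on at most two lines; so at most two new classes arise. Otherwise every $S_{j}$ is a scalar multiple of $S_{3}$, the parallelism condition becomes vacuous, and admissibility is governed entirely by $u^{T}S_{3}u=0$; since $S_{3}$ is invertible this quadratic form is nondegenerate and therefore has at most two isotropic lines. Either way the number of new classes is at most two.

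I expect the main obstacle to be the degenerate branch together with the small-field bookkeeping. One must argue cleanly that ``$S_{3}u\parallel S_{4}u$ for all $u$'' forces $S_{4}=cS_{3}$ (every vector is then an eigenvector of $S_{3}^{-1}S_{4}$, which makes it scalar, and this already holds for $p=2$), that a nonzero binary quadratic cannot vanish on all $p+1\ge 3$ lines, and that for $p=2$, where the index set $\{3,\dots,p+1\}$ is a singleton, the parallelism condition is simply absent so the bound comes from the isotropy equation alone. Checking that the translation of ``contained in the union'' into the parallelism condition is faithful (Property~\ref{prop:three}) and that the spread assumption is exactly what guarantees invertibility of every $S_{j}-S_{k}$ is where the hypothesis that the $p+1$ classes \emph{belong to a complete set} does its work.
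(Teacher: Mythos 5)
Your symplectic reformulation is a genuinely different route from the paper's, and for odd $p$ it is correct. The paper works entirely with explicit generators: after Lemma~\ref{lem:p+1} it puts everything in the normal form $\cC_1=\langle U_1,U_2\rangle$, $\cC_2=\langle V_1,V_2\rangle$, $\cC_j=\langle U_1^{j-2}V_1,U_2^{j-2}V_2\rangle$, $\cC_I=\langle U_1,V_1\rangle$, $\cC_{II}=\langle U_2,V_2\rangle$, posits a third class with generators $U_1^lU_2$ and $V_1^mV_2$, and kills it by matching exponents modulo $p$ (Cases I and II of Appendix~\ref{sec:appendix2}). You replace that bookkeeping with the observation that a new Lagrangian $M$ is determined by the line $M\cap L_1$, and that admissible lines are cut out either by the binary quadratic $u\mapsto\det[\,S_3u\mid S_4u\,]$ (when two spread-set matrices are non-proportional) or by the isotropy form $u\mapsto\omega(u,S_3u)$ (when all are proportional); for odd $p$ either form is nonzero and hence vanishes on at most two lines. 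This buys two things the paper's argument lacks: it avoids the paper's unargued ``without loss of generality'' reduction of $\cC_{III}$ to a specific generator form, and it uses only pairwise disjointness of the $p+1$ classes rather than their membership in a complete set, so your statement is in fact slightly stronger.

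However, your claim about $p=2$ is a genuine error, and not a cosmetic one: the lemma itself fails there. In characteristic $2$ the isotropy form $q(u)=\omega(u,S_3u)$ satisfies $q(u+v)=q(u)+q(v)$ (its polarization is alternating), so it can vanish identically even though $S_3$ is invertible; then all three lines of $L_1$ are admissible and \emph{three} new classes arise, not at most two. This is exactly what happens for two qubits: from $\cC_1=\{\cX_2\otimes\cI_2,\,\cI_2\otimes\cX_2,\,\cX_2\otimes\cX_2\}$, $\cC_2=\{\cI_2\otimes\cZ_2,\,\cZ_2\otimes\cI_2,\,\cZ_2\otimes\cZ_2\}$, $\cC_3=\{\cX_2\otimes\cZ_2,\,\cZ_2\otimes\cX_2,\,\cY_2\otimes\cY_2\}$ one forms the three new classes $\{\cX_2\otimes\cI_2,\cI_2\otimes\cZ_2,\cX_2\otimes\cZ_2\}$, $\{\cI_2\otimes\cX_2,\cZ_2\otimes\cI_2,\cZ_2\otimes\cX_2\}$, and $\{\cX_2\otimes\cX_2,\cZ_2\otimes\cZ_2,\cY_2\otimes\cY_2\}$. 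The paper is aware of this: its proof concludes only ``for all prime $p>2$,'' and the NOTE following it records that for $p=2$ Case II becomes possible. So you should restrict the statement and your proof to odd primes, and replace your $p=2$ clause by the observation that the characteristic-$2$ degeneracy of the isotropy form is precisely why $d=4$ behaves differently.
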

\begin{proof}
Consider a set of $p+1$ classes $\cC_{1}, \cC_{2}, \ldots, \cC_{p+1}$ using the elements of which, two more classes $\cC_{I}, \cC_{II}$ can be formed. Without loss of generality, we may assume the $p+1$ classes are of the following form: the classes $\cC_{1}$ and $\cC_{2}$ are first defined as $\cC_{1} \equiv \langle U_{1}, U_{2}\rangle$ and $\cC_{2} \equiv \langle V_{1}, V_{2}\rangle$, with generators that satisfy the commutation relations in Eq.~\eqref{eq:comm_relations}; the remaining $p-1$ classes are then constructed using the generators of $\cC_{1}, \cC_{2}$, as: 
\[\cC_{j} \equiv \langle U_{1}^{j-2}V_{1}, U_{2}^{j-2}V_{2}\rangle, j=3, \ldots, p+1 .\]
The two new classes $\cC_{I}$ and $\cC_{II}$ are then given by
\[ \cC_{I} \equiv \langle U_{1}, V_{1}\rangle, \; \cC_{II} \equiv \langle U_{2}, V_{2} \rangle\]

Let us now assume that we can form a third class $\cC_{III}$ using the operators in $\cup_{i=1}^{p+1}\cC_{i}$. We can assume without loss of generality that the generators  of $\cC_{III}$ are of the form, $U_1^{l}U_2$ and $V_1^{m}V_2$. Since they commute, we have $km+l = 0~mod~p$. We now need $(U_1^{l}U_2)^{r} V_1^{m}V_2  \in \mathcal{C}_j$ for some  $j \in \{3, 4, \ldots, p+1\}$ and $l, r, m \in \{1,2,\ldots, p-1\}$. As per our assumptions, the generators of the class $\mathcal{C}_j$ are $U_1^{j-2}V_1$ and $U_2^{s}V_2$, where $s$ of course depends on the value of $j$ and $s \in \{1,2,\ldots, p-1\}$. 

We only need to match the exponents of the individual components, that is, we need:
\begin{align*}
U_1^{lr}U_2^{r}V_1^{m}V_2 &= (U_1^{t}V_1)^{m}U_2^{s}V_2 \\
  &= U_1^{tm}U_2^{s}V_1^{m}V_2,
 \end{align*}
 where we have defined $t = j-2$, so that $t \in \{1,2,\ldots, p-1\}$. This in turn implies that $lr = tm~{\rm mod}~p$ and $r=s~{\rm mod}~p$, which holds only if,
\begin{itemize}
\item Case I: $r=t=s$ and $l=m$, or
\item Case II: $r=m=s$ and $l=t$
\end{itemize}
Consider Case I: We know that $U_1^{t}V_1$ and $U_2^{s}V_2$ commute. This immediately means that $ks + t = 0~mod~p$. Since $k \neq 0$, $t=s$ iff $s=t=0$, which is a contradiction because we know that $s,t \in \{1,2,\ldots, p-1\}$. Hence Case I is not possible.

Consider Case II: This would require $[U_1^{l}V_1, U_2^{m}V_2] = 0$ and $[U_1^{l}U_2, V_1^{m}V_2] = 0$. The latter commutation relation translates to $(p-k)m + l = 0~{\rm mod}~p$, while the former translates to $km + l = 0~{\rm mod}~p$, which is possible if and only if (a) $p$ is an even number, which leads a contradiction because we only consider primes $p>2$, or, (b) if $m=l=0$, which is again not possible because $m,l \in \{1,2,\ldots,p-1\}$.
Hence Case II is not possible.

This proves that we cannot form a third class from the set of $p+1$ classes mentioned before, for all prime $p > 2$.
\end{proof}
 
NOTE: For $p=2$, Case II is possible and hence another class can be formed, as already noted in~\cite{unext_MBGW}. 


\section{The case of $d=3^{2}$}\label{sec:appendix3}

We first observe two properties of a complete set of $p^{2}+1$ classes in $d=p^{2}$, which are not specific to $p=3$. 

\begin{property}\label{prop:eight}
Let $\cC_{1}, \cC_{2}, \ldots, \cC_{p+1}$ be a set of $p+1$ classes out of a complete set of classes in $d=p^{2}$, such that a new class $\cC_{I} \equiv \langle U_{1}, V_{1}\rangle $ can be formed using  $U_{1} \in \cC_{1}$, $V_{1} \in \cC_{2}$, $U_{1}V_{1} \in \cC_{3}$ , etc. Then, there exist $\tilde{U}_2 \in \cC_{1}$ and $\tilde{V}_{2} \in \cC_{2}$ such that $\tilde{U}_{2}\tilde{V}_{2} \in \cC_{3}$.
\end{property}
\begin{proof}
Let $\cC_{1} \equiv \langle U_{1}, U_{2}\rangle$ and $\cC_{2} \equiv \langle V_{1}, V_{2}\rangle$, with the generators satisfying the commutation relations in Eq.~\eqref{eq:comm_relations}. We are given that $U_{1}V_{1} \in \cC_{3}$. We need to find another independent operator in $\cC_{3}$ so as to completely generate $\cC_{3}$. 

Starting with any operator of the form $W = U_{1}^{a}V_{1}^{b}U_{2}^{c}V_{2}^{d}$, we have the following constraints on $a,b,c,d \in \mathbb{F}_{p}$ so that $W$ is a valid operator in $\cC_{3}$, given the classes $\cC_{1}$ and $\cC_{2}$. 
\begin{itemize}
\item[(i)] $[U_{2}, U_{1}^{a}V_{1}^{b}U_{2}^{c}V_{2}^{d}] = 0$. We impose this condition since we need to have an operator in
$\cC_{3}$ commuting with $U_{2}$. This implies that $b = 0~mod~p$. The operator $W$ is thus of the form $U_{1}^{a}U_{2}^{c}V_{2}^{d}$.
\item[(ii)] $[U_{1}V_{1}, U_{1}^{a}U_{2}^{c}V_{2}^{d}] = 0$. Therefore $(p-1)c + d = 0~mod~p$. Hence $c = d~mod~p$.
The operator $W$ is thus of the form $U_{1}^{a}U_{2}^{c}V_{2}^{c}$.
\end{itemize}
If $W = U_{1}^{a}U_{2}^{c}V_{2}^{c}$ is indeed a generator of $\cC_{3}$ all products of $U_{1}V_{1}$ and $U_{1}^{a}U_{2}^{c}V_{2}^{c}$ must belong in $\cC_{3}$. Hence 
\[(U_{1}^{a}U_{2}^{c}V_{2}^{c})^{2} (U_{1}V_{1})^{p-a} = U_{1}^{a}V_{1}^{p-a}U_{2}^{2c}V_{2}^{2c} \in \cC_{3}. \] The last operator is obtained as a product of
$U_{1}^{a}U_{2}^{2c} \in \cC_{1}$ and $V_{1}^{p-a}V_{2}^{2c} \in \cC_{2}$. Therefore, if we let,
\begin{align*}
\tilde{U}_{2}& := U_{1}^{a}U_{2}^{2c}\\
\tilde{V}_{2}& := V_{1}^{p-a}V_{2}^{2c}
\end{align*}
we have $\tilde{U}_{2}\tilde{V}_{2} \in \cC_{3}$ as desired.
\end{proof}

\begin{property}\label{prop:nine}
Consider $\cC_{1}, \cC_{2}, \ldots, \cC_{p+1}$ to be part of a complete set of classes in $d=p^{2}$ system such that at least one new class can be formed by picking $p-1$ operators from each of these $p+1$ classes. Let the generators of $\cC_{1} \equiv \langle U_{1}, U_{2}\rangle$ and $\cC_{2} \equiv \langle V_{1}, V_{2}\rangle$, with $U_{1}V_{2} = \alpha V_{2} U_{1}$. Then, we can always redefine $U_{2} \in \cC_{1}$ such that $U_{2}V_{1} = \alpha V_{1}U_{2}$, where $\alpha$ is a $p^{th}$ root of unity.
\end{property}
\begin{proof}
Let us suppose that $U_{2}V_{1} = \alpha^{j} V_{1}U_{2}$ for some $j \in \mathbb{F}_{p} \setminus \{0\} $. We know that $U_{2}^{k} \in
\cC_{1} ~\forall~ k \in \mathbb{F}_{p}$. Therefore $U_{2}^{j^{-1}} \in \mathbb{F}_{p}$. Letting $U_{2} := U_{2}^{j^{-1}}$, we see that
$U_{2}V_{1} = \alpha V_{1}U_{2}$.
\end{proof}

\begin{theorem*}[\ref{th:p3case}]
In $d = 3^{2}$, given $\cC_{1}$, $\cC_{2}$, $\cC_{3}$ and $\cC_{4}$ belonging to a complete set of classes and using which
one new class can be constructed, it is always possible to construct another class using the set of $4$ classes. In other words, it is not possible to construct $N(3) = 3^{2}-3+1 = 7$ unextendible set of classes.
\end{theorem*}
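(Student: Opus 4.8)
The plan is to reduce the theorem to a single numerical relation between the two classes $\cC_3,\cC_4$ that the transversal forces, and then to kill the offending value of that relation using completeness of the ambient set of $p^2+1$ classes. Throughout I specialize to $p=3$, where the new class $\cC_I$ uses exactly the two ``diagonal'' operators $U_1V_1$ and $U_1^2V_1$. First I normalize: using Property~\ref{prop:nine} I fix $\cC_1\equiv\langle U_1,U_2\rangle$ and $\cC_2\equiv\langle V_1,V_2\rangle$ with $U_1V_2=\omega V_2U_1$, $U_2V_1=\omega V_1U_2$, $[U_1,V_1]=[U_2,V_2]=0$, and take $\cC_I\equiv\langle U_1,V_1\rangle$ so that $U_1V_1\in\cC_3$ and $U_1^2V_1\in\cC_4$ (these are in distinct classes by Property~\ref{prop:two}). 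By Property~\ref{prop:eight} the second generator of each of $\cC_3,\cC_4$ is a product of an operator of $\cC_1$ with one of $\cC_2$, so after normalization I may write
\begin{align*}
\cC_3 &\equiv \langle U_1 V_1,\; U_2 V_1^{c'} V_2\rangle, \\
\cC_4 &\equiv \langle U_1^2 V_1,\; U_2 V_1^{c''} V_2^2\rangle,
\end{align*}
where $c',c''\in\mathbb{F}_3$ record which representatives actually occur in the complete set.

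The natural candidate for a second new class is $\cC_{II}\equiv\langle U_2,V_2\rangle$, which is automatically a maximal commuting class because $[U_2,V_2]=0$. Its four operators are $U_2\in\cC_1$, $V_2\in\cC_2$, $U_2V_2$, and $U_2^2V_2$. Expanding the last two in the generators of $\cC_3$ and $\cC_4$ and matching exponents gives $U_2V_2\in\cC_3$ if and only if $c'=0$, and $U_2^2V_2\in\cC_4$ if and only if $c''=0$. The residual freedom $U_2\mapsto U_2U_1^{k}$ shifts $(c',c'')\mapsto(c'+k,\,c''+2k)$, so it leaves the combination $s:=c'+c''\in\mathbb{F}_3$ invariant while allowing me to set $c'=0$ (whereupon $c''=s$). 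Hence a second class built from $\cC_1,\dots,\cC_4$ exists precisely when $s=0$; for $p=3$ this single statement is the entire content of the theorem, so everything reduces to showing $s=0$.

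Showing $s=0$ is the crux, and I expect it to be the hardest step, because it is exactly where completeness (rather than the purely local Properties~\ref{prop:two}--\ref{prop:eight}) must enter. The plan is to argue that $s\neq0$ is incompatible with $\cC_1,\dots,\cC_4$ lying in a complete set. Consider the operators $U_1V_2$ and $U_2V_1$; using the normalized commutators one checks neither lies in $\cC_1,\dots,\cC_4$, so by Property~\ref{prop:two} each commutes with exactly one operator-line of each of $\cC_1,\cC_2,\cC_3,\cC_4$, i.e.\ with four such lines in all. Since any operator sits in exactly $p+1=4$ maximal commuting classes, completeness demands that $U_1V_2$ (and likewise $U_2V_1$) be assignable to a fifth class disjoint from $\cC_1,\dots,\cC_4$, which is possible only if at least two of its four commuting lines lie in a common maximal class through it. I would then verify, by listing these four lines in the chosen generators, that for $s=1$ the four lines through $U_1V_2$ fall in four \emph{distinct} maximal classes (blocking $U_1V_2$), that for $s=2$ the same happens for $U_2V_1$, and that only for $s=0$ do two lines coincide in a common class for both operators. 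Thus $s\neq0$ leaves one of $U_1V_2,U_2V_1$ unplaceable, contradicting completeness and forcing $s=0$, hence the existence of $\cC_{II}$.

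Two remarks frame why this works and where the difficulty sits. The value of $p$ enters essentially: for $p=3$ there are only the two diagonals $U_1V_1,U_1^2V_1$, so the would-be parameters of $\cC_3,\dots,\cC_{p+1}$ collapse to the single invariant $s$ that completeness can pin down, whereas for $p>3$ the analogous parameters do not collapse, which is consistent with $N(p)=p^2-p+1$ remaining attainable in that regime (Theorem~\ref{th:UnextCons}). The combinatorial bookkeeping of Step~3 is the delicate part; an equivalent and perhaps cleaner phrasing I would fall back on is that the invariant $s$ measures whether $\cC_4$ is a secant or a tangent to the ruled structure determined by $\cC_1,\cC_2,\cC_3$, with the tangent case $s\neq0$ being the one obstructed by completeness. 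Either way, once $s=0$ is established, the conclusion $N(3)\neq7$ follows immediately: one can never form exactly one new class from four classes of a complete set, so the construction of Theorem~\ref{th:UnextCons} cannot produce an unextendible set of cardinality $p^2-p+1=7$ when $p=3$.
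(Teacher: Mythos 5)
Your proposal is correct in substance and, after the shared normalization step (your use of Properties~\ref{prop:eight} and~\ref{prop:nine} matches the paper's), it takes a genuinely different tactical route. The paper argues by contradiction around a single \emph{diagonal} witness: assuming no second class exists, so that $U_2^2V_2\notin\cC_4$, it tracks the operator $U_1^2V_1U_2^2V_2$ and shows, by solving the commutation/non-commutation constraints on a would-be second generator, that any maximal class containing this operator and disjoint from $\cC_1,\ldots,\cC_4$ would have to contain $U_1^2V_1^2U_2V_2=(U_1V_1)^2(U_2V_2)\in\cC_3$, a contradiction; completeness then forces $U_1^2V_1U_2^2V_2\in\cC_4$, hence $U_2^2V_2\in\cC_4$ and $\cC_{II}\equiv\langle U_2,V_2\rangle$ exists. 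You instead package the data of $\cC_3,\cC_4$ into the invariant $s=c'+c''$ and block the \emph{off-diagonal} witnesses $U_1V_2$ and $U_2V_1$ by pigeonhole over the four maximal classes through each. I verified your deferred computation and it comes out exactly as you predict: in exponent coordinates with symplectic form $ad'-da'+bc'-cb'$, the four commuting lines through $U_1V_2$ are $\langle U_2\rangle$, $\langle V_1\rangle$, $\langle U_1U_2V_1V_2\rangle$, $\langle U_1^2U_2V_1^{1+s}V_2^2\rangle$, and two of them share a plane through $U_1V_2$ only when $s\in\{0,2\}$, while dually for $U_2V_1$ this happens only when $s\in\{0,1\}$; so each nonzero $s$ blocks exactly one witness and contradicts completeness. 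What your route buys is a cleaner conceptual frame (a single $\mathbb{F}_3$-valued invariant, closer to the partial-spread viewpoint of~\cite{thas14}), and it makes transparent why $p=3$ is special; what it costs is two ingredients you must still supply explicitly: (i) the counting fact that every nonidentity operator lies in exactly $p+1=4$ maximal commuting classes, which the paper never states (it follows because Lagrangian planes through a point correspond to the $p+1$ lines of the two-dimensional symplectic quotient of its centralizer), and (ii) the ``only if'' half of ``a second class exists precisely when $s=0$'': a second class could a priori be $\langle U_1^{a}U_2,\,V_1^{-a}V_2\rangle$ with $a\neq0$, and you need the short exponent-matching check that its product can only land in $\cC_3$ when $a=0$ (equivalently, that the combined residual freedom on $U_2$ and $V_2$ normalizes any candidate to $\langle U_2,V_2\rangle$ while preserving $s$). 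With those two points written out, your argument stands as a complete and independent proof of Theorem~\ref{th:p3case}.
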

\begin{proof}
As before, we assume that the classes $\cC_{1}, \cC_{2}$ are of the form 
\[\cC_{1} \equiv \langle U_{1}, U_{2}\rangle, \; \cC_{2} \equiv \langle V_{1}, V_{2}  \rangle .\]
Further, we assume that $U_{1}V_{2} \in \cC_{3}$ and $U_{1}^{2}V_{1} \in \cC_{4}$, implying the existence of one more class $\cC_{I} \equiv \langle U_{1}, V_{1} \rangle$. We also know from properties~\ref{prop:eight} and~\ref{prop:nine} that the generators of $\cC_{1}, \cC_{2}$ can be chosen to satisfy $U_{1}V_{2} = \alpha V_{2}U_{1}$, $U_{2}V_{1} = \alpha V_{1}U_{2}$, such that  $U_{2}V_{2} \in \cC_{3}$.

Let us now assume that it is not possible to find one more class using operators in the set $\{\cC_{1}, \cC_{2}, \cC_{3}, \cC_{4}\}$. This would imply that $U_{2}^{2}V_{2} \notin \cC_{4}$, and therefore, $U_1^2V_1 U_2^2V_2 \notin \cC_4$. Since we are given that the set $\{\cC_{1}, \cC_{2}, \cC_{3}, \cC_{4}\}$ is part a complete set of classes, the operator $U_1^2V_1 U_2^2V_2$ must belong some other class, say, $\cC_{n}, 5 \leq n \leq p^(2)+1$.


We then show that it is not possible to complete the class $\cC_{n}$, in such a way that it is mutually disjoint with  $\{\cC_{1}, \cC_{2}, \cC_{3}, \cC_{4}\}$. We first note that a general operator $G = U_1^{a}V_1^{b}U_2^{c}V_2^{d} \in \cC_{n}$ must satisfy the following conditions:

\begin{eqnarray}
\left[ U_1^2V_1U_2^2V_2 , U_1^{a}V_1^{b}U_2^{c}V_2^{d} \right] &=& 0, \\
\left[ U_1^2V_1 , U_1^{a}V_1^{b}U_2^{c}V_2^{d} \right] &\neq& 0 \\
\left[ U_2^2V_2 , U_1^{a}V_1^{b}U_2^{c}V_2^{d} \right] &\neq& 0 \\
\left[ U_1^2U_2, U_1^{a}V_1^{b}U_2^{c}V_2^{d} \right] &\neq& 0 \\ 
\left[ V_1^2V_2, U_1^{a}V_1^{b}U_2^{c}V_2^{d} \right] &\neq& 0 
\end{eqnarray}

Eq.(1) along with inequalities (2) and (3) has two possible solutions: (i) $b=2a+1$ and $d = 2c+2$, or, (ii) $b = 2a+2$ and $d = 2c+1$.

Option (i) along with inequalities (4),(5) above imply that $c = (a+2){\rm mod} 3$. This implies that the second generator $G$ of the class $\cC_n$ must be of the following form: $U_1^{a}V_1^{2a+1}U_2^{a+2}V_2^{2a}$. Substituting different values of $a \in \mathbb{F}_{3}$,  we get,
\[ G \in \{ U_1V_2^2, V_1U_2^2 , U_1^2V_1^2U_2V_2\}. \] We We note that option (ii), along with inequalities (4), (5) do not give any further solutions: they simply yield the squares of these operators. 

It is easy to see that the set of solutions for $G$, along with the operator $U_{1}^{2}V_{1}U_{2}^{2}V_{2}$ is closed under multiplication. Therefore, all three possible solutions for the second generator lead to the same class: 
\[ \cC_{n} \equiv \langle U_1^2 V_1 U_2^2 V_2, U_1^2 V_1^2 U_2V_2 \rangle\].  
However, we have chosen the operators $U_{2}, V_{2}$  that $U_1^2 V_1^2 U_2V_2 \in \cC_3$. Thus the class $\cC_{n}$ containing the operator $U_1^2 V_1 U_2^2 V_2$ cannot be  mutually disjoint with the initial set of three classes. Therefore, either $U_1^2 V_1 U_2^2 V_2 \in \cC_{4}$, leading to a {\it second} class $\cC_{II} \equiv \langle U_{2}, V_{2} \rangle$, or, $\cC_{1}, \cC_{2}, \cC_{3}, \cC_{4}$ cannot be extended to a complete set of $p^{2}+1$ classes.
\end{proof}

\section{Proof of tightness of EURs} \label{sec:appendix4}
  
A key ingredient of the proof pf Theorem~\ref{th:tightnessEUR} is a parameterization of the basis-vectors of the bases $\{\cB_i\}$ corresponding to classes $\{\cC_{i}\}$, in terms of vectors in a $p+1$-dimensional vector space over the field $\mathbb{F}_{p}$. We first index the operators in class $\cC_{i}$ as follows: starting with a pair of generators $ \sigma_{i}^{(1)}, \sigma_{i}^{(2)}$, we may obtain a set of $p+1$ {\it independent} operators as products of these two generators:
  \[ \sigma_{i}^{(k)} = (\sigma_{i}^{(1)})^{k-2}(\sigma_{i}^{(2)}) ~ \forall~k \in [3,p+1].\]
The remaining operators in $\cC_{i}$ are simply powers of $\sigma_{i}^{(k)}, k \in [1,p+1]$. A general operator in $\cC_{i}$ is thus denoted as $(\sigma_{i}^{(k)})^{j}, j \in \mathbb{F}_{p}$.

Consider an operator in the span of the operators constituting the class $\cC_{i}$, of the following form:
\begin{eqnarray}
\rho_{i}^{\bf x}  &=& \frac{I}{p^2} + \sum_{l=1}^{p+1}\left[\omega ^ {x_{l}} \sigma_{i}^{(l)} + \omega ^ {2x_l} (\sigma_{i}^{(l)} )^2\right] \nonumber \\
&+& \ldots  + \sum_{l=1}^{p+1}\left[\omega ^ {(p-1)x_l} (\sigma_{i}^{(l)} )^{p-1} \right], \label{eq:basis_vec}
\end{eqnarray}
where, ${\bf x} \equiv (x_{1}, x_{2}, \ldots, x_{p+1}) \in \mathbb{F}_{p}^{p+1}$, with 
\[ 	x_{k} = (k-2)x_{1} + x_{2}~mod~p , \; k \in [3,p+1] .\]
Clearly, $Tr(\rho_{i}^{\bf x}) = 1$, and, $(\rho_{i}^{\bf x})^{\dagger} = \rho_{i}^{\bf x}$. The latter follows from the fact that, \[ (\sigma_{i}^{(l)})^{\dagger} = (\sigma_{i}^{(l)})^{p+1}. \]
Further, it is easy to check that $Tr((\rho_{i}^{\bf x})^{2}) = 1$:
\begin{align*}
Tr((\rho_{i}^{\bf x})^{2}) &= \frac{1}{p^{4}}Tr(I +  \sum_{l=1}^{p+1} \sum_{k=1}^{p-1} I).\\
&= \frac{1}{p^{4}} Tr(I + (p^{2}-1)I) = \frac{1}{p^{2}}Tr(I) = 1.
\end{align*}

Finally, the following lemma proves that $\rho_{i}^{\bf x}$ is in fact a rank-$1$ projector. 
\begin{lemma}
The operator defined in Eq.~\eqref{eq:basis_vec} satisfies $(\rho_{i}^{\bf x})^{2} = \rho_{i}^{\bf x}$.
\end{lemma}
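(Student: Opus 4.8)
The plan is to recognize $\rho_{i}^{\bf x}$ as a normalized \emph{character sum} over the finite abelian group underlying the class $\cC_{i}$, which turns idempotency into the standard ``group projector'' identity. First I would observe that the operators $\{(\sigma_{i}^{(l)})^{k}:1\le l\le p+1,\ 1\le k\le p-1\}$ together with the identity form a genuine unitary representation of $G\cong\mathbb{F}_{p}^{2}$: since all operators in a single class commute \emph{exactly} and each satisfies $U^{p}=\cI_{p^{2}}$ (as noted for every $U\in\cU^{(p^{2})}$), the map $(a,b)\mapsto(\sigma_{i}^{(1)})^{a}(\sigma_{i}^{(2)})^{b}$ is a well-defined homomorphism, not merely a projective one. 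Under this identification the $p+1$ values of $l$ label the $p+1$ distinct order-$p$ cyclic subgroups (lines) of $\mathbb{F}_{p}^{2}$, and by the structure in Property~\ref{prop:one} every non-identity element is hit exactly once as some $(\sigma_{i}^{(l)})^{k}$. Hence the sum in \eqref{eq:basis_vec} runs over each $g\in G$ precisely once, and $\rho_{i}^{\bf x}=\frac{1}{p^{2}}\sum_{g\in G}c(g)\,g$ for the coefficient function $c$ read off from \eqref{eq:basis_vec}.

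Next I would show that $c$ is exactly the character $\chi_{\bf x}$ defined by $\chi_{\bf x}\big((\sigma_{i}^{(1)})^{a}(\sigma_{i}^{(2)})^{b}\big)=\omega^{a x_{1}+b x_{2}}$. Indeed $\sigma_{i}^{(l)}=(\sigma_{i}^{(1)})^{l-2}\sigma_{i}^{(2)}$ corresponds to the vector $(l-2,1)$, so linearity of $\chi_{\bf x}$ forces $\chi_{\bf x}(\sigma_{i}^{(l)})=\omega^{(l-2)x_{1}+x_{2}}$; this is precisely the constraint $x_{l}=(l-2)x_{1}+x_{2}\bmod p$ imposed on the components of ${\bf x}$. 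Multiplicativity then gives $\chi_{\bf x}((\sigma_{i}^{(l)})^{k})=\omega^{k x_{l}}$, matching the coefficients in \eqref{eq:basis_vec} term by term (the identity carries coefficient $1=\chi_{\bf x}(I)$). Thus the numerical relation among the $x_{l}$ is exactly what promotes $c$ to a bona fide group character.

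With $\rho_{i}^{\bf x}=\frac{1}{p^{2}}\sum_{g\in G}\chi_{\bf x}(g)\,g$ in hand, idempotency is a short reindexing:
\begin{align*}
(\rho_{i}^{\bf x})^{2}
&=\frac{1}{p^{4}}\sum_{g,h\in G}\chi_{\bf x}(g)\,\chi_{\bf x}(h)\,gh
 =\frac{1}{p^{4}}\sum_{g,n\in G}\chi_{\bf x}(g)\,\chi_{\bf x}(g^{-1}n)\,n \\
&=\frac{1}{p^{4}}\Big(\sum_{g\in G}1\Big)\sum_{n\in G}\chi_{\bf x}(n)\,n
 =\frac{1}{p^{2}}\sum_{n\in G}\chi_{\bf x}(n)\,n=\rho_{i}^{\bf x},
\end{align*}
where I substituted $n=gh$ and used $\chi_{\bf x}(g)\,\chi_{\bf x}(g^{-1}n)=\chi_{\bf x}(n)$ together with $|G|=p^{2}$. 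Combined with the already-verified facts $\mathrm{Tr}(\rho_{i}^{\bf x})=1$ and $(\rho_{i}^{\bf x})^{\dagger}=\rho_{i}^{\bf x}$, this shows $\rho_{i}^{\bf x}$ is a rank-$1$ orthogonal projector.

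The genuinely load-bearing step is the first one: confirming that the class furnishes a \emph{true} (not projective) representation of $\mathbb{F}_{p}^{2}$ and that the listed operators enumerate $G$ without repetition. This rests on exact commutativity within a class, so that the products $gh$ carry no stray phases and the reindexing above is exact, and on $U^{p}=\cI_{p^{2}}$, so that exponents genuinely live in $\mathbb{F}_{p}$ (for odd $p$ this is what forces $\omega^{\binom{p}{2}}=1$ in products such as $(\cX_{p}\cZ_{p})^{p}$). Once this is secured the remainder is routine character algebra; I would only need to verify that the overall $1/p^{2}$ normalization implicit in \eqref{eq:basis_vec}—consistent with $\mathrm{Tr}((\rho_{i}^{\bf x})^{2})=1$—is propagated correctly through each line.
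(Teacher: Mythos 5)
Your proof is correct, and it packages the argument more structurally than the paper does, though both run on the same underlying mechanism. The paper's proof is computational: it expands $(\rho_i^{\bf x})^2$, reduces idempotency to the absorption property $\omega^{ax_b}(\sigma_i^{(b)})^a\,\rho_i^{\bf x}=\rho_i^{\bf x}$ for every $a\in\mathbb{F}_p$, $b\in[1,p+1]$, and proves absorption by multiplying two generic weighted terms and matching exponents of $\sigma_i^{(1)},\sigma_i^{(2)}$ by hand (with Kronecker deltas to handle the cases $b=1,2$, and a two-case split according to whether the product lands on the line of $\sigma_i^{(1)}$ or on a general line $\beta$); closure plus uniqueness of the image term shows that multiplication by any single term permutes the $p^2$ terms of $\rho_i^{\bf x}$, and summing gives the result. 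What you do differently is to name the structure this computation secretly verifies: the weighted terms are exactly $\chi_{\bf x}(g)\,g$ with $g$ running once over the group $G\cong\mathbb{F}_p^2$ generated by the commuting pair $\sigma_i^{(1)},\sigma_i^{(2)}$, and the constraint $x_l=(l-2)x_1+x_2\bmod p$ is precisely multiplicativity of the coefficient function, i.e.\ the statement that it is a character of $G$. Your double-sum reindexing $(g,h)\mapsto(g,n=gh)$ is then the standard group-algebra projector identity and replaces the paper's case analysis and term-by-term absorption argument in one stroke. Both proofs rest on the same two load-bearing facts you isolate --- phase-free commutativity inside a class (so the representation is genuine rather than projective) and $U^p=\cI_{p^2}$ (so exponents live in $\mathbb{F}_p$) --- and you rightly flag the normalization slip in Eq.~\eqref{eq:basis_vec}, where the $1/p^2$ prefactor is dropped from the non-identity terms but restored in Eq.~\eqref{eq:basis_state} and in the paper's own trace computations. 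The trade-off: the paper's version is elementary and self-contained, needing no representation-theoretic vocabulary; yours is shorter, makes the bijective enumeration ($p+1$ lines times $p-1$ powers giving the $p^2-1$ non-identity elements) transparent, and generalizes verbatim to abelian Pauli classes in any prime-power dimension.
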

\begin{proof}
\begin{widetext}
\begin{eqnarray}
&& (\rho_{i}^{\bf x})^{2} = \frac{1}{p^{2}}(\rho_{i}^{\bf x} + (\omega^{x_{1}} \sigma_{i}^{(1)}\rho_{i}^{\bf x} + \omega^{x_{2}}\sigma_{i}^{(2)}\rho_{i}^{\bf x} + \ldots + \omega^{x_{p+1}}\sigma_{i}^{(p+1)}\rho_{i}^{\bf x}) \nonumber \\ 
&+& (\omega^{2x_{1}}(\sigma_{i}^{1})^{(2)} \rho_{i}^{\bf x} + \omega^{2x_{2}} (\sigma_{i}^{(2)})^{2} \rho_{i}^{\bf x} + \ldots + \omega^{2x_{p+1}} (\sigma_{i}^{(p+1)})^{2} \rho_{i}^{\bf x} ) \nonumber \\
&+& \ldots \nonumber \\ 
&+& (\omega^{(p-1)x_{1}} (\sigma_{i}^{(1)})^{p-1} \rho_{i}^{\bf x} + \omega^{(p-1)x_{2}}(\sigma_{i}^{(2)})^{p-1} \rho_{i}^{\bf x} + \ldots + \omega^{(p-1)x_{p+1}} (\sigma_{i}^{(p+1)})^{p-1}  \rho_{i}^{\bf x}) ). \label{eq:vec_sum}
\end{eqnarray}
\end{widetext}
We require that
\[ (\rho_{i}^{\bf x})^{2} = \frac{1}{p^{2}} (p^{2}\rho_{i}^{\bf x}) = \rho_{i}^{\bf x} .\]
This would imply that we need each term of the summation in Eq.~\eqref{eq:vec_sum} be $\rho_{i}^{\bf x}$. We show that this is indeed the case, since 
\[ \omega^{ax_{b}}(\sigma_{i}^{(b)})^{a}\rho_{i}^{\bf x} = \rho_{i}^{\bf x} ~ \forall \; a \in \mathbb{F}_{p}, \forall b \in [1,p+1].\]
To see this, consider a generic term in the operator sum $\omega^{ax_{b}}(\sigma_{i}^{b})^{a}\rho_{i}^{\bf x}$ of the form,
\[ \omega^{ax_{b}}(\sigma_{i}^{(b)})^{a} \times \omega^{cx_{d}}(\sigma_{i}^{(d)})^{c} = \omega^{ax_{b} + cx_{d}} (\sigma_{i}^{(b)})^{a}(\sigma_{i}^{(d)})^{c}.\]
We have the following:
\begin{widetext}
 \begin{align*}
ax_{b} + cx_{d} &= a((b-2(1-\delta_{b,1}))x_{1} + (1-\delta_{b,1})x_{2}) +  c((d-2(1-\delta_{d,1}))x_{1} + (1-\delta_{d,1})x_{2}).\\
		&= (ab + cd - 2(a(1-\delta_{b,1})+c(1-\delta_{d,1})))x_{1} + (a(1-\delta_{b,1})+c(1-\delta_{d,1}))x_{2}.\\
		(\sigma_{i}^{(b)})^{a}(\sigma_{i}^{(d)})^{c} &= ((\sigma_{i}^{1})^{(b-2)}(\sigma_{i}^{(2)})^{1-\delta_{b,1}})^{a}((\sigma_{i}^{(1)})^{d-2}(\sigma_{i}^{(2)})^{1-\delta_{d,1}})^{c}.\\
		&= (\sigma_{i}^{(1)})^{ab+cd - 2(a(1-\delta_{b,1})+c(1-\delta_{d,1}))}(\sigma_{i}^{(2)})^{a(1-\delta_{b,1})+c(1-\delta_{d,1})}.
\end{align*}
\end{widetext}
We now have two possibilities:
\begin{itemize}
\item[(\bf{A}) ] $a(1-\delta_{b,1})+c(1-\delta_{d,1}) = 0~mod~p$. Therefore 
\begin{align*}
ax_{b} + cx_{d} &= (ab + cd)x_{1} = \gamma x_{1}, \\
(\sigma_{i}^{(b)})^{a}(\sigma_{i}^{(d)})^{c} &= (\sigma_{i}^{(1)})^{ab+cd} 	= (\sigma_{i}^{(1)})^{\gamma},
\end{align*}
where, $\gamma = (ab+cd) ~ mod~p$.
\item[(\bf{B}) ] $a(1-\delta_{b,1})+c(1-\delta_{d,1}) \neq 0~mod~p$, in which case, $(a(1-\delta_{b,1})+c(1-\delta_{d,1}))^{-1}$ exists. Therefore we have,
\begin{align*}
ax_{b} + cx_{d} &= (a(1-\delta_{b,1})+c(1-\delta_{d,1}))(((ab+cd) \\& (a(1-\delta_{b,1})+c(1-\delta_{d,1}))^{-1} - 2)x_{1} + x_{2}).\\
					&= \gamma x_{\beta}.\\
					(\sigma_{i}^{(b)})^{a}(\sigma_{i}^{(d)})^{c} &= ((\sigma_{i}^{(1)})^{\beta - 2}\sigma_{i}^{(2)})^{\gamma} = (\sigma_{i}^{(\beta}))^{\gamma}
\end{align*}
where $\gamma = (a(1-\delta_{b,1})+c(1-\delta_{d,1}))~mod~p$ and $\beta = (ab+cd)(a(1-\delta_{b,1})+c(1-\delta_{d,1}))^{-1} ~ mod~(p+1)$.
\end{itemize}
To summarize, for $a,c \in \mathbb{F}_{p}$ and $b,d \in [ 1, p+1 ]$ we have shown, 
\[
\omega^{ax_{b}}(\sigma_{i}^{(b)})^{a} \times \omega^{cx_{d}}(\sigma_{i}^{(d)})^{c} = \omega^{\gamma x_{\beta}}(\sigma_{i}^{(\beta)})^{\gamma},
\]
for some $\gamma \in \mathbb{F}_{p}$ and $\beta \in [1,p+1]$. Thus, $\omega^{\gamma x_{\beta}}(\sigma_{i}^{(\beta)})^{\gamma}$ is one of the terms that occurs when $\rho_{i}^{\bf x}$ is expanded in the $\{\sigma^{(i)}\}$ operator basis. Every term in the operator expansion for the product of $\rho_{i}^{\bf x}$ with $\omega^{ax_{b}}(\sigma_{i}^{(b)})^{a}$ gives a unique term in the operator expansion of $\rho_{i}^{\bf x}$. We therefore conclude that $\omega^{ax_{b}}(\sigma_{i}^{(b)})^{a}\rho_{i}^{\bf x} = \rho_{i}^{\bf x} ~ \forall a \in \mathbb{F}_{p}, \forall b \in [1,p+1]$. 
\end{proof}

$\rho_{i}^{\bf x}$ is therefore a valid pure state in the span of the operators belonging to the $i^{th}$ class. We may therefore rewrite the state as $\rho_{i}^{\bf x} = |b_{i}^{\bf x}\rangle \langle b_{i}^{\bf x}|$, where, $|b_{i}^{\bf x}\rangle$ is a common eigenvector of the operators belonging to the class $\cC_{i}$. Recall that the vector ${\bf x} \in \mathbb{F}_{p}^{p+1}$ is determined when we pick a pair of elements $x_{1}, x_{2} \in \mathbb{F}_{p}$. Since $x_{1}$ and $x_{2}$ can each take on $p$ values, we can find $p^{2}$ such pure states in the space of the operators belonging to a given class $\cC_{i}$. Furthermore, we show that these $p^{2}$ states are indeed orthogonal to each other, thus constituting an orthonormal basis for the class $\cC_{i}$. 

Consider states $|b_{i}^{\bf x}\rangle, |b_{i}^{\bf y}\rangle$, with ${\bf x} \neq {\bf y}$. Then,
\begin{eqnarray}
&& \tr[\, |b_i^{\bf x}\rangle \langle b_i^{\bf x}|b_i^{\bf y}\rangle \langle b_i^{\bf y}| \,] \nonumber \\
&=& \frac{1}{p^2} + \frac{1}{p^{2}} \sum_{l=1}^{p+1} [ \omega ^ {x_{l} + (p-1)y_{l}} + \omega ^{2x_{l} + (p-2)y_{l}}] \nonumber \\
& +& \frac{1}{p^{2}}[\, \ldots + \sum_{l=1}^{p+1}\omega ^ {(p-1)x_{l} + y_{l}} ].
\end{eqnarray}
For the RHS to vanish, we require that $x_{l} = y_{l}$ for one and only one $l \in \{1,2,\ldots,p+1\}$. In other words, the strings given by ${\bf x} = x_{1}x_{2}\ldots x_{p+1}$ and ${\bf y} = y_{1}y_{2}\ldots y_{p+1}$ must agree at exactly one position, in order for the corresponding states $|b_{i}^{\bf x}\rangle$, $|b_{i}^{\bf y}\rangle$ to be mutually orthogonal. Given a vector ${\bf x}$, there exist exactly $p^{2}$ vectors that coincide with ${\bf x}$ at exactly one location. Thus, corresponding to each class $\cC_{i}$, we have an orthonormal basis for the entire Hilbert space $\mathbb{C}^{p^{2}}$, with basis vectors 
\begin{eqnarray}
|b_i^{\bf x}\rangle \langle b_i^{\bf x}|  &=& \frac{I}{p^2} + \frac{1}{p^{2}}\sum_{l=1}^{p+1}\left[ \omega ^ {x_{l}} \sigma_{i}^{l} + \omega ^ {2x_l} (\sigma_{i}^{l} )^2 \right] \nonumber \\
&+&     \ldots + \frac{1}{p^{2}}\sum_{l=1}^{p+1}\omega ^ {(p-1)x_l} (\sigma_{i}^{l} )^{p-1} . \label{eq:basis_state}
\end{eqnarray}
We are now ready to prove Theorem~\ref{th:tightnessEUR}, the statement of which we recall here.

\begin{theorem*}[\ref{th:tightnessEUR}]
   In dimension $d = p^2$, consider a set of $(p+1)$ classes $\{\mathcal{C}_1, \mathcal{C}_2,\ldots, \mathcal{C}_{p+1}\}$,
   such that at least one more maximal commuting class $\mathcal{C_I}$ can be constructed by picking $(p-1)$ operators
   (of the form $U, U^2, \ldots, U^{p-1}$) from each of the classes. Then, the MUBs $\{\cB_{1}, \cB_{2}, \ldots, \cB_{p+1}\}$ corresponding to the classes $\{\cC_{1}, \cC_{2}, \ldots, \cC_{p+1}\}$ saturate the following entropic uncertainty relations:
  \begin{eqnarray}
    \frac{1}{p+1} \sum_{i=1}^{p+1}H_2(\mathcal{B}_i||\psi \rangle) &\geq \log  p , \label{eq:H2p} \\
    \frac{1}{p+1} \sum_{i=1}^{p+1}H_{1}(\mathcal{B}_i||\psi \rangle) &\geq \log  p, \label{eq:H1p}
 \end{eqnarray}
with the lower bound attained by the common eigenstates of the newly constructed class $\cC_{I}$.
 \end{theorem*}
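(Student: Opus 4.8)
The plan is to treat the two displayed inequalities as the known entropic bounds of~\cite{BW07} (Shannon) and~\cite{WYM09} (collision) specialized to $L=p+1$, $d=p^2$, and to concentrate on \emph{saturation}. First I would check that both right-hand sides collapse to $\log p$: the collision bound of Eq.~\eqref{eq:H2} gives $-\log\frac{(p+1)+p^2-1}{(p+1)p^2}=-\log\frac1p=\log p$, and the averaged Shannon bound of Eq.~\eqref{eq:Shannon} gives $\tfrac12\log d=\log p$. Since $H_1\geq H_2$ holds pointwise for every distribution, driving the $H_1$-average down to $\log p$ would squeeze the $H_2$-average to $\log p$ as well; so it is enough to exhibit states for which \emph{each} single-basis distribution has Shannon and collision entropy exactly $\log p$. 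I claim the common eigenstates $|b_I^{\bf x}\rangle$ of the auxiliary class $\cC_I$ are precisely such states.

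The heart of the argument is the overlap $|\langle b_I^{\bf x}|b_i^{\bf y}\rangle|^2=\tr[\rho_I^{\bf x}\rho_i^{\bf y}]$, evaluated from the operator expansions of Eq.~\eqref{eq:basis_state}. The key structural input is that, by the very construction of $\cC_I$ (one independent generator $U_1^{k}V_1$ drawn from each $\cC_i$, per Lemma~\ref{lem:p+1}), the classes $\cC_I$ and $\cC_i$ share exactly $p-1$ operators, namely the powers $\tau,\tau^2,\ldots,\tau^{p-1}$ of a single operator $\tau$. Using orthogonality of the unitary operator basis ($\tr[\sigma\sigma']\neq0$ only when $\sigma\sigma'\propto\cI_{p^2}$), every cross term in $\tr[\rho_I^{\bf x}\rho_i^{\bf y}]$ vanishes except the identity contribution and the contributions of these $p-1$ shared powers. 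Tracking the $\omega$-phases through the two parametrizations, the shared powers combine into one geometric sum, and I expect to obtain
\begin{equation*}
\tr[\rho_I^{\bf x}\rho_i^{\bf y}]=\frac{1}{p^2}\Big(1+\sum_{k=1}^{p-1}\omega^{k(x_l-y_{l'})}\Big),
\end{equation*}
where $l,l'$ are the indices under which $\tau$ appears in $\cC_I$ and $\cC_i$ respectively.

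The geometric sum equals $p-1$ when $x_l=y_{l'}$ and $-1$ otherwise, so the overlap is $1/p$ when the strings agree in the single relevant coordinate and $0$ otherwise. Because $y_{l'}$ is a fixed non-constant linear function of the free parameters $(y_1,y_2)$, the condition $y_{l'}=x_l$ has exactly $p$ solutions; hence measuring $|b_I^{\bf x}\rangle$ in $\cB_i$ yields a distribution uniform on $p$ of the $p^2$ outcomes and zero elsewhere (this count is also forced by normalization, which serves as a consistency check). A uniform distribution on $p$ points has $H_1=H_2=\log p$, and since this holds for every $i\in\{1,\ldots,p+1\}$, both averages equal $\log p$, matching the lower bounds. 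This simultaneously identifies the $|b_I^{\bf x}\rangle$ as the ``look-alike'' states~\cite{wootters:stateslooksame}.

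The main obstacle I anticipate is the phase bookkeeping in the overlap: one must verify that the $p-1$ operators common to $\cC_I$ and $\cC_i$ really are the powers of one generator in \emph{both} classes, and correctly translate the index $l$ (labelling $\tau$ as a power of the generators of $\cC_I$) into the index $l'$ and the matching power in the $\cC_i$ parametrization, including the case where $\tau$ occurs as a nontrivial power of a generator. Reducing the phase exponent to the clean form $k(x_l-y_{l'})$ --- rather than a mixed expression in several coordinates --- is what makes the sum collapse and is the only genuinely delicate step; the remaining R\'enyi-entropy evaluation of a flat distribution is routine.
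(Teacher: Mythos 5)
Your proposal is correct and takes essentially the same route as the paper's own proof: both treat the two inequalities as the known bounds from the literature (whose right-hand sides coincide at $\log p$ for $L=p+1$, $d=p^2$), parametrize the basis states and the common eigenstates of $\mathcal{C}_I$ as operator expansions over their classes, and use trace orthogonality of the unitary basis so that only the identity and the $p-1$ shared powers $\tau,\ldots,\tau^{p-1}$ contribute to each overlap, yielding a geometric sum, an outcome distribution uniform on exactly $p$ of the $p^2$ outcomes, and hence $H_1=H_2=\log p$ in every basis $\mathcal{B}_i$. The phase bookkeeping you flag as the delicate step is precisely what the paper's appendix carries out (together with verifying that the parametrized operators are rank-one projectors forming orthonormal bases); your extra $H_1\geq H_2$ squeeze is a harmless shortcut the paper does not need.
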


\begin{proof}
Let $\cC_{1}, \cC_{2}, \ldots, \cC_{p+1}$ be a set of $p+1$ classes such that at least one maximal commuting class can be constructed by picking $p-1$ operators from each class. We denote by $\{|b_i^{\bf a_m}\rangle\}$ the states that constitute the basis $\cB_i$ associated with the class $\cC_{i}$, using a set of  vectors $\mathbf{a_m} \in (\mathbb{F}_{p})^{p+1}$, $m \in [1,p^2]$. Thus $|b_i^{\bf a_m}\rangle$ is the $m^{th}$ basis vector of the $i^{th}$ basis. 

Now consider a class constructed from $\{\mathcal{C}_1, \mathcal{C}_2,\ldots, \mathcal{C}_{p+1}\}$ by picking $(p-1)$ operators from each class. Without loss of generality, we can index these operators as $\{\sigma_{i}^{(i)}, (\sigma_{i}^{(i)})^{2}, \ldots, (\sigma_{i}^{(i)})^{p-1}\}$, where $\sigma_{i}^{(i)}\in \cC_{i}$. Further, we parameterize a common eigenstate of this new class $\cC_{I}$ as follows:
\begin{align*}
|\psi\rangle \langle \psi | &=&  \frac{I}{p^2} + \sum_{l=1}^{p+1}\left[\omega ^ {e_l} \sigma_{l}^{l} + \omega ^ {2e_l} (\sigma_{l}^{l} )^2\right]  \\ 
&+& \ldots + \frac{1}{p^{2}}\omega ^ {(p-1)e_l} (\sigma_{l}^{l} )^{p-1} .
\end{align*}
The $H_2$ entropy of any state is given by:
\begin{align*}
H_2(\mathcal{B}_i||\psi\rangle) = -\log \sum_{\bf x =  a_1,a_2,\ldots,a_{p^2}}(Tr[|b_i^{\bf x}\rangle \langle b_i^{\bf x}|\psi\rangle \langle \psi |])^2
\end{align*}

\begin{align*}
= -\log \sum_{\bf x=a_1,a_2,\ldots,a_{p^2}}(\frac{1}{p^4} Tr\{ I + I \sum_{l=1}^{p+1} \omega^{x_{l}+(p-1)e_{l}} \\
   + \omega^{2x_{l}+(p-2)e_{l}} +\ldots +
   \omega^{(p-1)x_{l}+e_{l}}\})^2
  \end{align*}

For a given $\bf e$ and a given $l$, there are only $p$ vectors ${\bf x}$ with $x_{l}=e_{l}$. Therefore we have:
\begin{align*}
H_2(\mathcal{B}_i||\psi\rangle) &=  -\log~p \times \frac{1}{p^2} \\
   &= \log~p
\end{align*}
We see that this is true for any basis $\cB_{i} ~\forall~ i \in [1,2,\ldots,p+1]$ corresponding to any of the $p+1$ classes which were used to construct the new class $\cC_{I}$. Therefore we have,
\begin{equation*}
\frac{1}{p+1} \sum_{i=1}^{p+1}H_2(\mathcal{B}_i||\psi \rangle) = \log_2~p
\end{equation*}
This implies that the $H_{2}$-entropic uncertainty relation in Eq.~\eqref{eq:H2p} is tight for these $p+1$ measurement bases.

We now show that the Shannon entropic uncertainty relation in Eq.~\eqref{eq:H1p} is also tight for any prime squared dimensions. The  Shannon entropy associated with a measurement of $\cB_{i}$ on state $|\psi\rangle$ is given by,
\[
H(\mathcal{B}_i||\psi\rangle) = -\sum_{x=a_1,a_2,\ldots,a_{p^2}}  p_{i, |\psi\rangle}^{x}  \log p_{i, |\psi\rangle}^{x}  , \]
where, the probability $p_{i, |\psi\rangle}^{x}$ of obtaining outcome ${\bf x}$ when measuring basis $\cB_{i}$ on state $|\psi\rangle$ is given by
\begin{eqnarray}
p_{i, |\psi\rangle}^{\bf x} &=& \tr[\,|b_i^{x}\rangle \langle b_i^{x}|\psi\rangle \langle \psi| \,] \nonumber \\
&=& -\frac{1}{p^2}  + \frac{1}{p^2}\tr \sum_{l=1}^{p+1} \left[ \omega^{x_{l}+(p-1)e_{l}} +
   \omega^{2x_{l}+(p-2)e_{l}}  \right] \nonumber \\
   &+& \ldots + \frac{1}{p^{2}} \sum_{l=1}^{p+1}    \omega^{(p-1)x_{l}+e_{l}} . \nonumber
\end{eqnarray}   
Recall that, for a given $l$, $x_l = e_l$ only for $p$ $\bf x \in \{a_{1},a_{2},\ldots,a_{p^{2}}\}$. Hence, we have:
  \begin{align*}
   H(\mathcal{B}_j||\psi\rangle) = -p\times (\frac{p}{p^2}\log~\frac{1}{p}) = \log~p
\end{align*}
Therefore we have:
\begin{equation*}
\frac{1}{p+1} \sum_{i=1}^{p+1}H(\mathcal{B}_i||\psi \rangle) = \log~p ,
\end{equation*}
thus proving that the Shannon uncertainty relation in Eq.~\eqref{eq:H1p}is tight for these $p+1$ measurement bases.
  \end{proof}


\end{document}